\newcommand{\floor}[1]{\left \lfloor #1 \right \rfloor}
\def \B{\mathcal{B}}
\def \C{\mathcal{C}}
\def \E{\mathcal{E}}
\def \Q{\mathcal{Q}}
\def \S{\mathcal{S}}
\def \T{\mathcal{T}}
\def \X{\mathcal{X}}
\def \bB {\mathcal{B}}
\def \sT {\mathscr{T}}
\def \fc{\mathbf{c}}
\def \fc{\mathbf{c}}
\def \fu{\mathbf{u}}
\def \fx{\mathbf{x}}
\def \fy{\mathbf{y}}
\def \fY{\mathbf{Y}}
\def \f0{\mathbf{0}}
\definecolor{blau_1a}{RGB}{93,133,195}
\definecolor{blau_2a}{RGB}{0,156,218}
\definecolor{gruen_3a}{RGB}{80,182,149}
\definecolor{gruen_4a}{RGB}{175,204,80}
\definecolor{gruen_5a}{RGB}{221,223,72}
\definecolor{orange_6a}{RGB}{255,224,92}
\definecolor{orange_7a}{RGB}{248,186,60}
\definecolor{rot_8a}{RGB}{238,122,52}
\definecolor{rot_9a}{RGB}{233,80,62}
\definecolor{lila_10a}{RGB}{201,48,142}
\definecolor{lila_11a}{RGB}{128,69,151}
\definecolor{blau_1b}{RGB}{0,90,169}
\definecolor{blau_2b}{RGB}{0,131,204}
\definecolor{gruen_3b}{RGB}{0,157,129}
\definecolor{gruen_4b}{RGB}{153,192,0}
\definecolor{gruen_5b}{RGB}{201,212,0}
\definecolor{orange_6b}{RGB}{253,202,0}
\definecolor{orange_7b}{RGB}{245,163,0}
\definecolor{rot_8b}{RGB}{236,101,0}
\definecolor{rot_9b}{RGB}{230,0,26}
\definecolor{lila_10b}{RGB}{166,0,132}
\definecolor{lila_11b}{RGB}{114,16,133}
\definecolor{mycolor1}{rgb}{0.0, 0.18, 0.39}
\definecolor{mycolor2}{RGB}{87,108,67}
\definecolor{mycolor3}{RGB}{8,133,161}
\definecolor{mycolor4}{RGB}{80,91,161}
\definecolor{mycolor5}{RGB}{98,122,157}
\definecolor{mycolor6}{RGB}{255,163,67}
\definecolor{mycolor7}{RGB}{152,205,225}
\definecolor{mycolor8}{RGB}{242,204,48}
\definecolor{mycolor9}{rgb}{0,.5,0}
\definecolor{mycolor10}{rgb}{.59,.44,.09}
\definecolor{mycolor11}{RGB}{231,199,31} % Yellow
\definecolor{mycolor12}{RGB}{8,133,161} % Cyan
\definecolor{mycolor13}{RGB}{157,188,64} % Yellow Green
\definecolor{mycolor14}{RGB}{194,150,130} % Light Skin
\definecolor{mycolor15}{RGB}{98,122,157} % Blue Sky
\definecolor{mycolor16}{RGB}{160,160,160} % Neutral
\definecolor{mycolor17}{RGB}{115,82,68} % Dark Skin
\definecolor{mycolor18}{RGB}{94,60,108} % Purple
\definecolor{mycolor19}{RGB}{115,82,68} % Dark Skin
\definecolor{mycolor20}{RGB}{255,183,30} % Dark Gold
\definecolor{mycolor21}{RGB}{185,146,103} % Latte
\pgfplotsset{compat=1.15}
\let\svtikzpicture\tikzpicture
\def\tikzpicture{\noindent\svtikzpicture}
\title{{\normalfont\duinfamily\bfseries DETERMINISTIC IDENTIFICATION\\ FOR MC BINOMIAL CHANNEL}}
  \author{{\small\durmfamily MOHAMMAD JAVAD SALARISEDDIGH}\,, {\small\durmfamily VAHID JAMALI}\,, {\small\durmfamily HOLGER BOCHE}\,, {\small\durmfamily CHRISTIAN DEPPE}\,, and {\small\durmfamily ROBERT SCHOBER}}
\begin{document}
% \pagecolor{gray!3!yellow!3}
\begin{frontmatter}
\maketitle
% ---
\begin{abstract}
The Binomial channel serves as a fundamental model for molecular communication (MC) systems employing molecule-counting receivers. Here, deterministic identification (DI) is addressed for the discrete-time Binomial channels (DTBC), subject to an average and a peak constraint on the molecule release rate. We establish that the number of different messages that can be reliably identified for the DTBC scales as $2^{(n\log n)R}$, where $n$ and $R$ are the codeword length and coding rate, respectively. Lower and upper bounds on the DI capacity of the DTBC are developed.
\keywords{Deterministic identification, Binomial channel, molecular communications, and molecule-counting receivers}
\end{abstract}
\end{frontmatter}
% --------------------
\section{Introduction}
\fontfamily{jkpl}\selectfont
Molecular communication (MC) is a new communication strategy where information carriers are signaling molecules \cite{FYECG16}. Over the past decade, synthetic MC has been investigated in a number of different directions including channel modeling \cite{Jamali19}, modulation and detection design \cite{Kuscu19}, biological building blocks for transceiver design \cite{Soldner20}, and information-theoretical performance characterization \cite{Gohari16}. Furthermore, several proof-of-concept implementations for synthetic MC systems have been reported, see, e.g., \cite{Farsad17}.
% Furthermore, the ongoing progress in synthetic biology \cite{Grozinger19,Soldner20} is expected to enable sophisticated MC systems in the future, capable of performing the complex computation and communication tasks needed for realizing the Internet of Bio-nano Things \cite{Aky15}.

Several applications of MC within the platform of future generation wireless networks (XG) \cite{6G_PST} give rise to event-triggered communication scenarios. Further discussions of the potentials of MC and post Shannon communication for the 6G can be found in \cite{Schwenteck23}. In such systems, Shannon's message transmission capacity, as studied early in \cite{S48}, may not be the appropriate performance metric, instead, the identification capacity is regarded to be a key quantitative measure. In particular, for object-finding or alarm-prompt scenarios, where the receiver determines the presence of an object or occurrence of an specific event by a reliable Yes\,/\,No answer, the so-called identification capacity is the applicable end-to-end performance measure \cite{AD89}.

While in Shannon's communication paradigm \cite{S48}, the sender encodes its message in a manner such that the receiver can perform a reliable estimation of \emph{all} messages, in the identification setting \cite{AD89}, the coding scheme is designed to accomplish a different objective, namely, to determine whether a \emph{particular} message was sent or not. The original problem of identification \cite{AD89} features a randomized encoding where each codeword is selected according to associated distribution. The size of the codebook for randomized identification (RI) problem grows double-exponentially in the codeword length $n$, i.e., $\sim 2^{ 2^{nR}}$ \cite{AD89}, where $R$ is the coding rate. Realization of RI codes entails high complexity and can be challenging for some applications; cf. \cite{Salariseddigh22}. Ahlswede and Dueck in \cite{AD89} were inspired to introduce the RI problem by the work of J\'aJ\'a \cite{J85} who considered a deterministic identification (DI) in communication complexity field \cite{AA20}. This problem may be also considered in the message transmission problem. In the DI problem, the codewords are selected via a deterministic function from the messages. DI may be preferred over randomized identification (RI) \cite{AD89} in complexity-constrained applications of MC systems, where the generation of random codewords is challenging. The DI for discrete memoryless and Gaussian channels is studied in \cite{Salariseddigh_IT}, where the codebook size scales exponentially and super-exponentially, i.e., $\sim 2^{nR}$ and $\sim 2^{(n\log n)R}$, respectively.
\pagecolor{gray!3!yellow!3}

The DI for discrete memoryless channels (DMCs) with average rate constraint is studied in \cite{Salariseddigh_ICC,Salariseddigh_IT} where size of the codebook similar to transmission problem of Shannon \cite{S48} grows exponentially in the codeword length \cite{AD89,Salariseddigh_ICC}. Furthermore, the DI for continuous alphabet channels including Gaussian channels with fast and slow fading and the memoryless discrete-time Poisson channel is addressed in \cite{Salariseddigh_ITW,Salariseddigh_IT,Salariseddigh_GC_IEEE,Salariseddigh_GC_arXiv} where a new observation regarding size of the codebook is established, namely, it scales super-exponentially with the codeword length $n$, i.e., $\sim 2^{(n\log n)R}$. A generalization of DI problem, called deterministic $K$-identification for the Gaussian channel with slow fading is studied in \cite{Salariseddigh22-3} where $K$-depending bounds on the DI capacity are established. The discrete time Poisson channel with inter-symbol interference (ISI) is studied in \cite{Salariseddigh22_2} where ISI-dependent bounds on the DI capacity are calculated.

In the context of MC, information can be encoded in the concentration (rate) of molecules released by transmitter and can be decoded based on the number of molecules reaching the receiver. Assuming that the release, propagation, and reception of different molecules are independent from each  other, the MC systems with molecule counting receivers are characterized by the Binomial channel. The transmission capacity of the Binomial channel is studied in \cite{Komninakis01,Wesel18}. In \cite{Farsad17-2} and \cite{Farsad20}, the Binomial channel law is used in modelling with imperfect particle intensity modulation and detection, is exploited for an MC channel. 
%where biological transmitter may attempt to transmit $m_{\tau} = \lfloor \lambda \tau \rfloor$ particles for communication where $\lambda$ is a fixed rate for molecule generation and $\tau$ reads the symbol duration
% \footnote{\,\textcolor{gray!70!black}{In our setup, we assume that the rate at which transmitter generate the molecules for different time slots, vary according the deterministic letter $x_t$, that is, the number of released molecules intended for communication is $m_{T_s} = \lfloor x_t T_s \rfloor$. For the sake of analysis, we assume throughout the paper, that $x_t T_s$ is an integer number, otherwise, the benefit of using a floor function, associated analysis are adjusted accordingly.}}.
In the literature, the Binomial channel is often approximated by the Poisson channel when the number of released molecules, denoted by $N$, is large \cite[Sec.~IV]{Jamali19}, for which bounds on the DI capacity are studied in \cite{Salariseddigh22}. However, to the best of the authors' knowledge, the fundamental performance limits of DI for the original Binomial channel, which does not rely on very large $N$, has not been so far investigated in the literature.

% In \cite[Ch. 2]{Damrath20}, the mean Kullback-Leibler divergence metric is used to analyze for such approximation accuracy.

% ------------------------
\subsection{Contributions}
% In this paper, we consider identification systems employing deterministic encoder and receivers that are interested to accomplish the identification task, namely, finding an object in a set of size $M$ where $M=2^{(n\log n)R}$. We assume that the communication over $n$ channel uses are independent of each other. Further, we assume that the CIR is available at the decoder. As our main objective, we investigate the fundamental performance limits of DI over the Binomial channel. In particular, this paper makes the following contributions:
% ---
Here, we consider identification systems employing deterministic encoder and receivers that are interested to accomplish the identification task, namely, finding an object in a set of size $M$ where $M=2^{(n\log n)R}$. We assume that the communication over $n$ channel uses are independent of each other. We formulate the problem of DI over the discrete-time Binomial channels (DTBC) under average and peak constraints which account for the restricted molecule production\,/\,release rates at the transmitter. As our main objective, we investigate the fundamental performance limits of DI over the Binomial channel. In particular, this work makes the following contributions:
\vspace{.5mm}
% ---
\begin{itemize}[leftmargin=*]
    % [$\diamondsuit$]
    \item \textbf{\textcolor{mycolor12}{Codebook Scale}}: We establish that the codebook size of DI problem over the Binomial channels with average and peak  constraints on the molecule release rate for deterministic encoding scales super-exponentially in the codeword length ($\sim 2^{(n\log n)R}$). This result is in contrast with the scaling of the codebook size for conventional transmission (i.e., $2^{nR}$ \cite{S48}) and RI (i.e., $2^{2^{nR}}$ \cite{AD89}). The enlarged codebook size of the identification problem compared to the transmission problem may have interesting implications for MC system design.
    % For instance, it may help explain the extremely large identification capability of natural olfactory systems and guide the design of olfactory-inspired synthetic MC systems \cite{Jamali22} by, e.g., determining the maximum number of identifiable molecule mixtures.
    \item \textbf{\textcolor{mycolor12}{Capacity Bounds}}: We derive lower and upper bounds on the DI capacity of the DTBC, which are the main results of this work. Such bounds does not reflect the impact of rate constraints $P_{\,\text{ave}},P_{\,\text{max}}$ in the super-exponential scale, unless requiring them to be positive and finite values.
    \item \textbf{\textcolor{mycolor12}{Technical Novelty}}: To obtain the proposed lower bound, the existence of an appropriate sphere packing within the input space, for which the distance between the centers of the spheres does not fall below a certain value, is guaranteed. While the radius of the small spheres in the Gaussian case \cite{Salariseddigh_ITW} tends to zero, here the radius grows\footnote{\,\textcolor{gray!70!black}{In particular, we pack hyper spheres with radius $\sqrt{n\epsilon_n} \sim n^{\frac{1}{4}}$, inside a larger hyper sphere, which results in $\sim 2^{\frac{1}{4} n\log n}$ codewords.}} in the codeword length, $n$. Yet, we show that we can pack a super-exponential number of spheres within the larger cube. For derivation of the upper bound, we assume that for given sequences of codes with vanishing error probabilities, a certain minimum distance between the codewords is asserted, where this distance decreases as $n$ grows. In general, the derivation of upper bound for Binomial channel is more involved compared to that for the Gaussian \cite{Salariseddigh_ITW} and Poisson channels \cite{Salariseddigh_GC_IEEE} and entails employing of new analysis and inequalities. Here, proving the continuity of Binomial law requires dealing with binomial coefficients and factorial terms. We used inequalities on the ratio of two Gamma function depending on the distance of two codeword's symbols in every possible cases. While for the Gaussian channels with fading \cite{Salariseddigh_ITW}, the converse proof was based on establishing a minimum distance between Euclidean norm of each pair of codewords. Here, we consider a distance (absolute value norm) between symbols of two different codeword in the converse Lemma; cf. \eqref{Ineq.Conv_Distance}.
\end{itemize}
% ----------------------
%\subsection{Organization}
%The remainder of this paper is structured as follows. In Section~\ref{Sec.SysModel}, system model and required preliminaries regarding DI codes are established. Section~\ref{Sec.Res} provides our main contributions for the DTPC. Finally, Section~\ref{Sec.Summary} concludes with a summary and directions for future research.

\subsection{Notations}
We use the following notations throughout this work:\\
Blackboard bold letters $\mathbbmss{K,X,Y,Z},\ldots$ are used for alphabet sets. Lower case letters $x,y,z,\ldots$ stand for constants and realization of random variables, and upper case letters $X,Y,Z,\ldots$ stand for random variables. $\lfloor x \rfloor$ stands for the greatest integer less than or equal to $x$. Lower case bold symbol $\fx$ and $\fy$ stand for row vectors of size $n$, that is, $\fx = (x_1, \dots, x_n)$ and $\fy = (y_1, \dots, y_n)$. The distribution of a random variable $X$ is specified by a probability mass function (pmf) $p_X(x)$ over a finite set $\X$. All logarithms and information quantities are for base $2$. The set of consecutive natural numbers from $1$ through $M$ is denoted by $[\![M]\!]$. We use symbol $\triangleq$ to specify a definition convention. The set of whole numbers is denoted by $\mathbb{N}_{0} \triangleq \{0,1,2,\ldots\}$. The set of real and negative numbers are denoted by $\mathbb{R}$ and $\mathbb{R}_{+}$, respectively. The gamma function for non-positive integer $x$ is denoted by $\Gamma(x)$ and is defined as $\Gamma (x) = (x-1) !$, where $(x-1)! \triangleq (x-1) \times \dots \times 1$. We use the small O notation, $f(n) = o(g(n))$, to indicate that $f(n)$ is dominated by $g(n)$ asymptotically, that is, $\lim_{n\to\infty} \frac{f(n)}{g(n)} = 0$. The big O notation, $f(n) = \mathcal{O}(g(n))$, is used to indicate that $|f(n)|$ is bounded above by $g(n)$ (up to constant factor) asymptotically, that is, $\limsup_{n\to\infty} \frac{|f(n)|}{g(n)} < \infty$. The the $\ell_1$-norm, $\ell_2$-norm and $\ell_{\infty}$-norm of vector $\fx$ are denoted by $\norm{\mathbf{x}}_1$, $\norm{\mathbf{x}}$, and $\norm{\mathbf{x}}_{\infty}$, respectively. Furthermore, we denote the $n$-dimensional hyper sphere of radius $r$ centered at $\fx_0$ with respect to the $\ell_2$-norm by $\S_{\fx_0}(n,r) = \{\fx\in\mathbb{R}_{+}^n : \norm{\fx-\fx_0} \leq r \}$. We use $\mathbf{0} = (0,\ldots,0)$ to represent coordination of the origin. An $n$-dimensional cube with center $(\frac{A}{2},\ldots,\frac{A}{2})$ and a corner at the origin, i.e., $\mathbf{0} = (0,\ldots,0)$, whose edges have length $A$ is denoted by $\Q_{\f0}(n,A) = \{\fx \in \mathbb{R}^n : 0 \leq x_t \leq A, \forall \, t\in[\![n]\!] \}$. We denote the considered Binomial channel by $\bB$.
% --------------------------------------
\section{System Model and Preliminaries}
\label{Sec.SysModel}
In this section, we present the adopted system model and establish some preliminaries regarding DI coding.
% ----------------------
\subsection{System Model}
We address an identification-focused communication setup, where the decoder's purpose is accomplishing the following task: Determining whether or not a target message
% \footnote{\,\textcolor{gray!70!black}{We make the convention that the transmitter has no knowledge about the desired message that the decoder is feed with. The intuition here is that otherwise, the entire communication setup is reduced down to the transmission of a single indicator bit between Alice and Bob.}}
was sent by the transmitter; see Figure~\ref{Fig.Binomial_Channel}. To attain this objective, a coded communication between the transmitter and the receiver over $n$ channel uses of an MC channel is established.
% \footnote{\,\textcolor{gray!70!black}{The proposed performance bounds works regardless of whether or not an specific code is used for communication, although proper codes may be required to approach such performance limits.}}
% Let $X\in\mathbb{R}_{+}$ and $Y\in\mathbb{N}_0$ indicate random variables (RVs) which model the input and output of channel.
We assume that for a given channel use, the transmitter generates $N = \lfloor  T_s X \rfloor$ molecules where $X$ is the rate for molecule generation and $T_s$ denotes the symbol duration. The generated molecules are released instantaneously into the channel at the beginning of the next symbol interval. Let $p$ denote the probability that a molecule released by the transmitter, is observed at the receiver, whose value depends on the parameters such as the diffusion coefficient of the molecules $D$, the distance between the transmitter and the receiver $d$, and the type of reception (e.g. absorbing or transparent receivers); see \cite{Jamali19} for further details. For instance, assuming molecule propagation via diffusion in an unbounded three-dimensional environment and under the approximation of uniform concentration  within the reception volume of a transparent receiver, $p$ at sampling time $\tau$ after the release of molecules is obtained as~\cite{Jamali19}
% ---
\begin{align} 
    p = \frac{V_{\rm rx}}{(4\pi D \tau)^{3/2}} e^{-\frac{d^2}{4D\tau}} \,,
\end{align}
% ---
where $V_{\rm rx}$ is the reception volume size. Assuming that the release, propagation, and reception of molecules are independent from each other,   the probability of observing $Y$ molecules at the receiver follows a Binomial distribution\footnote{\,\textcolor{gray!70!black}{The Binomial channel can be approximated by the Poisson channel (for large $N$ and small $Np$) and the Gaussian channel (for large $N$ and large $Np$); cf. \cite{Gohari16,Jamali19,Damrath20}. However, here, we study the  Binomial channel which does not rely on the assumption of large~$N$.}}, i.e.,
% ---
\begin{align}
    \text{Binom}\left( \lfloor  T_s X \rfloor,p \right) = \binom{\lfloor T_s X \rfloor}{Y} p^Y (1-p)^{\lfloor  T_s X \rfloor-Y} \,.
\end{align}
% ---

While in principle, MC channels are dispersive, the contribution of ISI can be made negligible if the symbol intervals are chosen sufficiently large such that the channel impulse response (CIR) fully decays to zero within one symbol interval. Alternatively, enzymes \cite{Adam14} and reactive information molecules, such as acid\,/\,base molecules \cite{Farsad16-2}, \cite{Jamali18}, may be used to speed up the decay of the CIR as a function of time, which would increase the accuracy of this assumption. Therefore, in such scenarios, we can assume that the MC channel between transmitter and receiver is characterized by a memoryless Binomial channel $\bB$, that is, the $n$ channel uses are independent. Hence, the transition probability law for $n$ channel uses is given by
% ---
\begin{align}
    \label{Eq.Binomial_Channel_Law}
    W^n(\fy|\fx) 
    %= \prod_{t=1}^n W(y_t|x_t) 
    = \prod_{t=1}^n \binom{\lfloor  T_s x_t \rfloor}{y_t} p^{y_t} (1-p)^{\lfloor  T_s x_t \rfloor-y_t} \,,\,
\end{align}
% ---
where $\fx = (x_1,\dots,x_{n})$ and $\fy = (y_1,\dots,y_{n})$ denote the transmitted codeword and the received signal, respectively.

The peak and average release rate constraints on the codewords $\fx = (x_t)\big|_{t=1}^n$ are
% ---
\begin{align}
    \label{Ineq.Const_X}
    0 \leq x_{t} \leq P_{\,\text{max}} \hspace{5mm} \text{and} \hspace{5mm} \frac{1}{n}\sum_{t=1}^{n} x_t \leq P_{\,\text{avg}} \,,\,
\end{align}
% ---
respectively, $\forall t\in[\![n]\!]$, where $P_{\,\text{max}} > 0$ and $P_{\,\text{avg}} > 0$ constrain the maximum value of molecule release rate per channel use and average molecule release rate over the entire $n$ channel uses in each codeword, respectively. 

While, unlike the Poisson and Gaussian approximations, the Binomial model does not require that the number of released molecules to be \textit{very} large \cite{Jamali19}, in practice, the number of released molecules cannot be too small either for reliable communication. This observation motivates us to adopt the approximation $\lfloor  T_s x_t \rfloor\approx T_s x_t$ in the remaining part, since the relative error $\frac{T_s x_t - \lfloor  T_s x_t \rfloor}{T_s x_t}\leq \frac{1}{T_s x_t}$ become sufficiently small for \textit{reasonably} large $T_s x_t$.
% -------------------------------------------------
\subsection{DI Coding For The Binomial Channel}

The definition of a DI code for the Binomial channel $\bB$ is given below.
% ---
\begin{definition}[Binomial DI Code]
\label{Def.Binomial-Code}
An $(n,\allowbreak M(n,R),\allowbreak K(n,\allowbreak \kappa), \allowbreak e_1, \allowbreak e_2)$ DI code for a Binomial channel $\bB$ under average and peak rate constraints of $P_{\,\text{ave}}$ and $P_{\,\text{max}}$, and for integers $M(n,R)$ and $K(n,\kappa)$, where $n$ and $R$ are the codeword length and coding rate, respectively, is defined as a system $(\C,\mathscr{T})$, which consists of a codebook $\C \hspace{-.5mm} = \hspace{-.5mm} \big\{ \mathbf{c}_i \big\}_{i\in[\![M]\!]} \hspace{-.5mm} \subset \hspace{-.5mm} \mathbb{R}_{+}^n$, such that
% ---
\begin{align}
     0 \leq c_{i,t} \leq P_{\,\text{max}} \hspace{5mm} \text{and} \hspace{5mm} \frac{1}{n}\sum_{t=1}^{n} c_{i,t} \leq P_{\,\text{avg}} \,,\,
\end{align}
% ---
$\forall i\in[\![M]\!]$, $\forall t \in [\![n]\!]$ and a collection of decoding regions $\mathscr{T}=\{ \mathbbmss{T}_i \}_{i\in[\![M]\!]}$ where $\bigcup_{i=1}^{M(n,R)} \mathbbmss{T}_i \subset \mathbb{N}_0^n$.
% % ---
% \begin{align}
%     \bigcup_{i=1}^{M(n,R)} \mathbbmss{T}_i \subset \mathbb{N}_0^n \,.
% \end{align}
% % ---
Given a message $i \in [\![M]\!]$, the encoder sends $\mathbf{c}_i$, and the decoder's task is to address a binary hypothesis: Was a target message $j \in \mathbbmss{K}$ sent or not? There exist two types of errors that may happen:
% see Figure~\ref{Fig.DI-Code}.
% (see Figure~\ref{Fig.DI-Code}):
% ---
\begin{description}
    \item [Type I]: Rejection of the \emph{actual} message, $i \in [\![M]\!]\,.$
    \item [Type II]: Acceptance of a \emph{wrong} message, $j \neq i\,.$
\end{description}
% ---
The associated error probabilities of the DI code $(\C,\sT)$ reads
% ---
\begin{align}
    \label{Eq.TypeIError}
    P_{e,1}(i) & = 1 -\sum_{\fy \in \mathbbmss{T}_{i}} W^n \left( \fy \, \big| \, \fc_i \right)
    % &&\hspace{-1.5mm} \text{(miss-identification error)}
    \,,\
    \\
    P_{e,2}(i,j) & = \sum_{\fy \in \mathbbmss{T}_{j}} W^n \left( \fy \, \big| \, \fc_i \right)
    % && \hspace{-1.5mm} \text{(false identification error)}
    \,.\;
    \label{Eq.TypeIIError}
\end{align}
% ---
and satisfy the following bounds $P_{e,1}(i) \leq e_1 \,,\, \forall i \in [\![M]\!]$ and $P_{e,2}(i,j) \leq e_2 \,,\, \forall i \neq j$, and every $e_1, e_2>0$.

A DI rate $R>0$ is called achievable if for every $e_1, \allowbreak e_2>0$ and sufficiently large $n$, there exists an $(n,\allowbreak M(n\allowbreak,R),\allowbreak K(n,\allowbreak \kappa), \allowbreak e_1, \allowbreak e_2)$ DI code. The operational DI capacity of the Binomial channel $\bB$ is defined as the supremum of all achievable rates, and is denoted by $\mathbb{C}_{\rm DI}(\bB,M)$.
% ---
\begin{figure}[H]
\centering
\resizebox{.87\linewidth}{!}{\scalebox{1}{
\begin{tikzpicture}[scale=.5
][thick]

\draw[dashed] (0.07,.5) circle (3.5cm);

\draw (0,3.6) circle (.1cm);
\draw (3,1.5) circle (.1cm);
\draw (2,-.1) circle (.1cm);
\draw (-2,2) circle (.1cm);
\draw (-.1,.6) circle (.1cm);
\draw (1.4,-1.4) circle (.1cm);
\draw (-1.8,-.8) circle (.1cm);

\node at (0,3.05) {$\fc_2$};
\draw[dashed] (0,3.6) circle (0.2cm);
\draw [fill=cyan!40, fill opacity=0.7] (0,3.6) circle (.1cm);

\node at (3,1) {$\fc_3$};
% \draw[dashed] (3,1.5) circle (0.2cm);
% \draw [fill=cyan!40, fill opacity=0.7] (3,1.5) circle (.1cm);

\node at (2,-.5) {$\fc_4$};

\node at (-2,1.6) {$\fc_1$};
\node at (-.1,.2) {$\fc_5$};

\node at (1.4,-1.9) {$\fc_6$};
\draw[dashed] (1.4,-1.4) circle (.2cm);
\draw [fill=cyan!40, fill opacity=0.7] (1.4,-1.4) circle (.1cm);

\node at (-1.8,-1.2) {$\fc_7$};

\draw[->] (0,4.4) -- ++(0,1)  node [fill=white,inner sep=3pt](a){$$\text{\fontfamily{jkpl}\selectfont Input Space}$$};

% \draw[->] (-2.3+15.07,4.55) -- ++(0,1.5)  node [fill=white,inner sep=3pt](a){output space};%{$\Y^n$};
\draw[->] (-2.3+15.07,4.4) -- ++(0,1)  node [fill=white,inner sep=3pt](a){$\text{\fontfamily{jkpl}\selectfont Output Space}$};

\draw[dashed] (-0.3+13.07,.4) circle (3.5cm);

\draw (-0.3+14.50,-.5) circle (1.2cm); %circle D...
\draw [fill=gray!20, fill opacity=0.7] (-0.3+14.50,-.5) circle (1.2cm);
\node at (-0.1+14.50,-.5) {$\mathbbmss{T}_{5}$};%{$\mathbbmss{T}_{\cdots}$};

\draw (-4.3+15.50,1.5) circle (1.2cm);    %circle D1
\draw [fill=gray!20, fill opacity=0.7] (-4.3+15.50,1.5) circle (1.2cm);
\node at (-4.3+15.30,1.5) {$\mathbbmss{T}_{1}$};

\draw (-2.3+15.07,-1.4) circle (1.2cm);    %circle Dn-1
\draw [fill=gray!20, fill opacity=0.7] (-2.3+15.07,-1.4) circle (1.2cm);
\node at (-2.3+15.07,-1.6) {$\mathbbmss{T}_{6}$};

\draw (-3.3+14.50,-.4) circle (1.2cm);    %circle Dn
\draw [fill=gray!20, fill opacity=0.7] (-3.3+14.50,-.4) circle (1.2cm);
\node at (-3.3+14.30,-.6) {$\mathbbmss{T}_{7}$};

\draw (-2.3+15.07,2.4) circle (1.2cm);     %circle D2
\draw [fill=gray!20, fill opacity=0.7] (-2.3+15.07,2.4) circle (1.2cm);
\node at (-2.3+15.07,2.6) {$\mathbbmss{T}_{2}$};

\draw (-0.3+14.50,1.4) circle (1.2cm);     %circle D3
\draw [fill=gray!20, fill opacity=0.7] (-0.3+14.50,1.4) circle (1.2cm);
\node at (-0.3+14.70,1.4) {$\mathbbmss{T}_{3}$};

\draw (-2.4+15.07,.5) circle (1.2cm);    %center circle D4
\draw [fill=gray!20, fill opacity=0.7] (-2.4+15.07,.5) circle (1.2cm);
\node at (-2.4+15.07,.5) {$\mathbbmss{T}_{4}$};

\path (0.2,3.7) edge [-> , thick, mycolor9, bend left] node [sloped,midway,above,font=\small] {$\text{\fontfamily{jkpl}\selectfont Correct Identification}$}(13,3.1);
\path (0.2,3.6) edge [-> , thick, rot_8b, bend right] node [sloped,midway,below,font=\small] {\text{\fontfamily{jkpl}\selectfont Type I Error}}(11.3,2.2);
\path (1.6,-1.3) edge [-> , thick, rot_9b, bend right] node [sloped,midway,below,font=\small] {\text{\fontfamily{jkpl}\selectfont Type II Error}}(12.7,2.1);

\end{tikzpicture}
}}
\caption{Illustration of a deterministic identification setting. Assuming that the decision maker is the decoder $\mathbbmss{T}_2$, in the correct identification scenario, channel output is observed in the decoder $\mathbbmss{T}_2$ whose index coincide the sent message. Type I (miss-identification) error occurs if the channel output is detected in the complement of the decoder whose index is identical to the sent message and type II error (false identification) happens where the index of decoder for which channel output belongs to, differs from the sent message.
}
\label{Fig.DI-Code}
\end{figure}
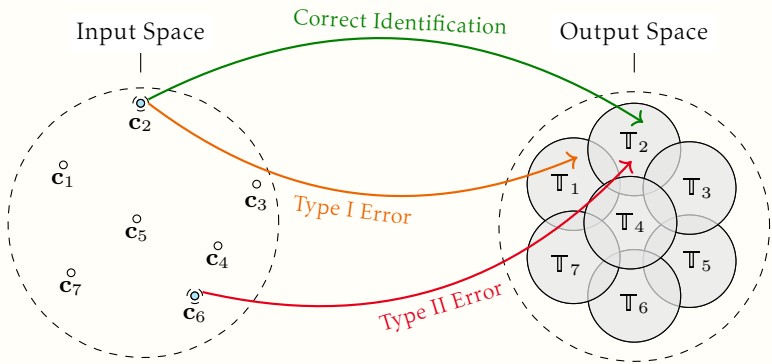
% % ---
\end{definition}
% ---
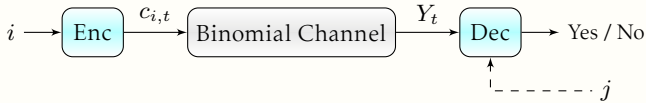
\begin{figure}[!t]
    \centering
    % \vspace{.5in}
	\tikzstyle{l} = [draw, -latex']
\tikzstyle{Block1} = [draw, top color = white, middle color = cyan!30, rectangle, rounded corners, minimum height=2em, minimum width=2.5em]
% \tikzstyle{Block2} = [draw,top color=white, bottom color=cyan!10, rectangle, rounded corners, minimum height=2em, minimum width=2em]
\tikzstyle{Block3} = [draw, top color = white, middle color = gray!20, rectangle, rounded corners, minimum height=2em, minimum width=1em]
\tikzstyle{Block4} = [draw, top color = white, middle color = orange_6b!30, rectangle, rounded corners, minimum height=2em, minimum width=2.5em]
\tikzstyle{Block5} = [draw, top color = white, middle color = cyan!30, rectangle, rounded corners, minimum height=2em, minimum width=2.5em]
\tikzstyle{input} = [coordinate]
\tikzstyle{sum} = [draw, circle,inner sep=0pt, minimum size=5mm,  thick]
\tikzstyle{conv} = [draw, circle,inner sep=0pt, minimum size=5mm,  thick]
\tikzstyle{arrow}=[draw,->]
\tikzstyle{small_node} = [draw, circle,inner sep=0pt, minimum size=.8mm,thick]
\begin{tikzpicture}[auto, node distance=2cm,>=latex']
\node[] (M) {$i$};
\node[Block1,right=.5cm of M] (enc) {\text{\fontfamily{jkpl}\selectfont Enc}};
% \node[conv, right=.9cm of enc] (Convolutioner) {\large $*$};
% \node[Block2, right=.9cm of enc] (Pois_Gen) {$\text{\small PTCL Gen}$};
\node[Block3, right=.84cm of enc] (channel) {$\text{\fontfamily{jkpl}\selectfont Binomial Channel}$};
% \node[above=.2cm of channel] () {$\fh = [h_0,\ldots,h_{K-1}]$};
%\node[right=.5cm of channel] (MidArrow) {};
% \node[small_node, right=.5cm of channel] (node) {};
% \node[sum, right=1.4cm of channel] (adder) {$+$};
\node[Block5, right=.84cm of channel] (dec) {\text{\fontfamily{jkpl}\selectfont Dec}};
% \node[below=.5cm of dec] (Target) {$j$};
\node[right=.5cm of dec] (Output) {$\text{\fontfamily{jkpl}\selectfont \small Yes\,/\,No}$};
\node[below=3mm of Output] (Target) {$j$};
%\node[above=.7cm of adder] (noise) {$N_t$};
% \node[Block4,above=.8cm of adder] (NoiseProcess) {$\text{Noise Process}$};
%\text{\small Pois$\left(\left[\fp \circledast \fx \right]_t^K\right)$}
%
\draw[->] (M) -- (enc);
\draw[->] (enc) -- node [above] {$\fontfamily{jkpl}\selectfont c_{i,t}$} (channel);
% \draw[->] (enc) -- node[above]{$\textbf{u}_i$} (Pois_Gen);
% \draw[->] (Pois_Gen) -- (channel);
% \draw[->] (channel) -- (adder);
% \draw[->] (Convolutioner) -- node [above] {\text{$\textbf{u}_i$}} (channel);
\draw[->] (channel) -- node [above] {$Y_t$} (dec);
% \draw[->] (node) -- node [below] {} (adder);
% \path [l] (node) |- node [text width=2.5cm,above] {} (NoiseProcess);
% \draw[->] (NoiseProcess) -- node[left]{$Z_t$}(adder);
% \draw[->] (filter) -- (Convolutioner);
% \draw[->] (adder) --node[above]{$Y_t$} (dec);
\draw[->] (dec) -- (Output);
% \draw[->] (Target) -- (dec);
\draw[dashed,<-] (dec) |- (Target);

\end{tikzpicture}
	\vspace{-1mm}
	\caption{End-to-end transmission chain for DI communication in a generic molecular communication system modelled as a Binomial channel. The transmitter maps message $i$ onto a codeword $\fc_i = (c_{i,t})|_{t=1}^n$. The receiver is provided with an arbitrary message $j$, and given the channel output vector $\fY = (Y_t)|_{t=1}^n$, it asks whether $j$ is identical to $i$ or not.}   
    \label{Fig.Binomial_Channel}
    \vspace{-2mm}
\end{figure}
% ---
% --------------------------------------------------------
\section{DI Capacity of the Binomial Channel}
\label{Sec.Res}

In this section, we first present our main results, i.e., lower and upper bounds on the achievable identification rates for the Binomial channel. Subsequently, we provide the detailed proofs of these bounds.

\subsection{Main Results}
The DI capacity theorem for the Binomial channel $\bB$ is stated below.
% ---
% \begin{sqbrrotnbr}{THEOREM}
\begin{theorem}
    \label{Th.DI-Capacity}
    Consider the Binomial channel $\bB$ subject to average and peak power constraints of the form $n^{-1} \sum_{t=1}^n c_{i,t} \leq P_{\,\text{ave}}$ and $0 \leq c_{i,t} \leq P_{\,\text{max}}$, respectively. Then, the DI capacity in the super-exponential scale, i.e., $M(n,R) = 2^{(n\log n)R}$, is bounded by
    % --- 
    \begin{align}
        \label{Ineq.LU}
        \frac{1}{4} \leq \mathbb{C}_{\rm DI}(\B,M) \leq \frac{3}{2} \,.\;
    \end{align}
    % ---
\end{theorem}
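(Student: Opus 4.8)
The plan is to prove the two inequalities of \eqref{Ineq.LU} separately: a direct part showing that every $R<\tfrac14$ is achievable, and a converse part showing that no $R>\tfrac32$ is achievable. Throughout I adopt the approximation $\lfloor T_s x\rfloor\approx T_s x$ of Section~\ref{Sec.SysModel}, so that the single-letter kernel is $W(y|x)=\binom{T_s x}{y}p^{y}(1-p)^{T_s x-y}$ with $\binom{T_s x}{y}=\Gamma(T_s x+1)/\big(\Gamma(y+1)\Gamma(T_s x-y+1)\big)$, and I write $\sigma^2(x)=T_s x\,p(1-p)$ and $pT_s x$ for the per-letter output variance and mean.

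\smallskip\noindent\emph{Achievability ($\mathbb{C}_{\rm DI}(\B,M)\ge\tfrac14$).} Fix $R<\tfrac14$ and $e_1,e_2>0$. I would take the codebook to consist of the centers of a maximal packing of $\ell_2$-balls $\S_{\fc}(n,r_n)$ of radius $r_n=\sqrt{n\epsilon_n}$ with $\epsilon_n\downarrow0$ slowly (say $\epsilon_n=n^{-1/2}(\log n)^2$, so $r_n\sim n^{1/4}\log n$), where the centers are moreover required to lie on the face $\mathcal{F}=\{\fx\in\Q_{\f0}(n,P_{\,\text{max}}):\ n^{-1}\sum_t x_t=\bar P\}$ for a fixed $\bar P\in(0,\min(P_{\,\text{max}},P_{\,\text{ave}}))$; this makes every codeword feasible for \eqref{Ineq.Const_X} and, crucially, gives \emph{all} codewords the same total output variance $\sigma^2_n:=n\bar P T_s p(1-p)$. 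A doubling (volume-ratio) argument on the $(n-1)$-dimensional set $\mathcal{F}$ together with a Stirling estimate gives $\log M\gtrsim-\tfrac n2\log\epsilon_n+O(n)=\tfrac14 n\log n\,(1+o(1))$, hence $M\ge 2^{(n\log n)R}$ for $n$ large. The decoder for message $j$ declares ``yes'' iff $\big|\,\|\fy-pT_s\fc_j\|^2-\sigma^2_n\,\big|\le\tau_n$ with $\sqrt n\ll\tau_n\ll n\epsilon_n$ (e.g.\ $\tau_n=\sqrt n\log n$). Since $\mathbb{E}_i\|\fY-pT_s\fc_j\|^2=\sigma^2_n+(pT_s)^2\|\fc_i-\fc_j\|^2$ and, using the boundedness of the $c_{i,t}$ and the binomial fourth central moment, the variance of this quadratic statistic is $O(n)$, Chebyshev's inequality gives $P_{e,1}(i)\le O(n)/\tau_n^2\to0$; and for $i\ne j$, since a packing forces $\|\fc_i-\fc_j\|\ge 2r_n$ and hence $(pT_s)^2\|\fc_i-\fc_j\|^2\ge 4(pT_s)^2 n\epsilon_n\gg\tau_n$, the wrong-center statistic sits far above the acceptance window, so $P_{e,2}(i,j)\le O(n)/(n\epsilon_n)^2\to0$. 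Thus $R$ is achievable.

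\smallskip\noindent\emph{Converse ($\mathbb{C}_{\rm DI}(\B,M)\le\tfrac32$).} Let $R$ be achievable and fix a sequence of $(n,M(n,R),K(n,\kappa),e_1,e_2)$ DI codes with $e_1+e_2<1$. From $P_{e,1}(i)\le e_1$ and $P_{e,2}(j,i)\le e_2$ one obtains, for all $i\ne j$,
\begin{align}
\label{Eq.Plan-TV}
\big\|W^n(\cdot|\fc_i)-W^n(\cdot|\fc_j)\big\|_{\rm TV}\ \ge\ W^n(\mathbbmss{T}_i|\fc_i)-W^n(\mathbbmss{T}_i|\fc_j)\ \ge\ 1-e_1-e_2 .
\end{align}
I would then prove a continuity lemma for the Binomial law: there is a modulus $\phi$ with $\phi(\delta)\to0$ as $\delta\to0$ such that $\|W(\cdot|x)-W(\cdot|x')\|_{\rm TV}\le\phi(|x-x'|)$ uniformly over $x,x'\in[0,P_{\,\text{max}}]$, obtained by expanding the single-letter pmf through its Gamma-function form and bounding the ratios $\Gamma(T_s x+1)/\Gamma(T_s x'+1)$ and $\Gamma(T_s x-y+1)/\Gamma(T_s x'-y+1)$ separately in the cases $x\gtrless x'$ and separately for the ``bulk'' and for the endpoint indices $y$ near the support boundary (where the differing supports of $W(\cdot|x)$ and $W(\cdot|x')$ must be accounted for). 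Tensorization of total variation, $\|W^n(\cdot|\fc_i)-W^n(\cdot|\fc_j)\|_{\rm TV}\le\sum_{t=1}^n\|W(\cdot|c_{i,t})-W(\cdot|c_{j,t})\|_{\rm TV}\le n\,\phi\big(\|\fc_i-\fc_j\|_\infty\big)$, combined with \eqref{Eq.Plan-TV}, then forces a quantitative minimum symbol-distance — every pair of codewords of a good code satisfies $\|\fc_i-\fc_j\|_\infty\ge\epsilon_n$ with $n\,\phi(\epsilon_n)\gtrsim 1$, so $\epsilon_n\to0$ at a polynomial rate determined by $\phi$ (this is the content of \eqref{Ineq.Conv_Distance}). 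Finally, a grid/volume packing bound inside $\Q_{\f0}(n,P_{\,\text{max}})$ — at most one codeword per axis-aligned cell of mesh $\epsilon_n$ — yields $M\le(P_{\,\text{max}}/\epsilon_n+1)^n$, whence $\log M\le n\log(1/\epsilon_n)+O(n)$; inserting the rate of decay of $\epsilon_n$ produced by $\phi$ gives $\log M\le\tfrac32 n\log n+O(n)$, i.e.\ $R\le\tfrac32$.

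\smallskip\noindent\emph{Main obstacle.} The technical heart is the continuity lemma and the bookkeeping that turns it into \eqref{Ineq.Conv_Distance}: unlike the Gaussian case \cite{Salariseddigh_ITW}, the kernel carries a binomial coefficient with a generally non-integer parameter $T_s x_t$ and a support $\{0,\dots,\lfloor T_s x_t\rfloor\}$ that itself depends on the input, so one cannot just differentiate a fixed-support likelihood; instead one must bound ratios of two Gamma functions uniformly over every configuration of $(c_{i,t},c_{j,t},y_t)$, and it is exactly the growth of those bounds — hence of $\phi$ and of $1/\epsilon_n$ — that pins the constant $\tfrac32$ in \eqref{Ineq.LU}. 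On the direct side the analogous, milder, difficulty is that the output variance is codeword-dependent; I handle this above by packing on a level set of $\sum_t c_{i,t}$, and the only point that then needs care is the $(n-1)$-dimensional Stirling volume estimate verifying that this restriction still leaves room for $2^{(n\log n)(1/4-o(1))}$ codewords with pairwise separation $\ge 2r_n$.
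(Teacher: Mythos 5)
Your proposal proves the same inequalities but takes a genuinely different route on both sides, and there is a concrete arithmetic gap in the converse sketch that you should be aware of.

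\paragraph{Achievability.} Your construction restricts the codewords to the level set $\sum_t c_{i,t}=n\bar P$ precisely so that the total output variance $\sigma_n^2$ is codeword-independent, and you threshold $\bigl|\,\|\fy-pT_s\fc_j\|^2-\sigma_n^2\,\bigr|$. The paper avoids the level-set restriction entirely by using the statistic $T(\fy,\fc_j)=\tfrac1n\sum_t\bigl[(y_t-pT_sc_{j,t})^2-(1-p)y_t\bigr]$: the term $-(1-p)y_t$ has expectation $-(1-p)pT_sc_{i,t}=-\operatorname{Var}[Y_t(i)]$ when $j=i$, so the metric is automatically zero-mean for \emph{every} codeword and one can pack in the full cube $\Q_{\f0}(n,A)$. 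Both routes give $\tfrac14$, but the paper's decoder is more economical (no one-dimension loss, no need to verify an $(n-1)$-dimensional Stirling estimate on a slanted face). Your error analysis otherwise parallels the paper's Chebyshev bounds and is sound.

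\paragraph{Converse.} Here there is a genuine gap in the sketch. Your chain $1-e_1-e_2 \le \|W^n(\cdot|\fc_i)-W^n(\cdot|\fc_j)\|_{\rm TV}\le n\,\phi(\|\fc_i-\fc_j\|_\infty)$ is correct (TV subadditivity for product measures), and the modulus of the Binomial kernel on the bounded interval $[0,P_{\max}]$ is \emph{linear}: the log-derivative $\tfrac{\partial}{\partial x}\log W(y|x)=T_s\psi(T_sx+1)-T_s\psi(T_sx-y+1)+T_s\log(1-p)$ is uniformly bounded over $x\in[0,P_{\max}]$ and $y$ in the support, so $\phi(\epsilon)=O(\epsilon)$. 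This forces $\|\fc_i-\fc_j\|_\infty\gtrsim 1/n$ (consistent with the paper's $\epsilon_n'=P_{\max}/n^{1+b}$). Plugging that into your $\ell_\infty$ grid bound $M\le(P_{\max}/\epsilon_n+1)^n$ gives $\log M\le(1+b)\,n\log n+O(n)$, i.e.\ $R\le 1$, \emph{not} $\tfrac32$. Your attribution of the constant $\tfrac32$ to ``the growth of $\phi$'' is therefore wrong: there is no exponent-$\tfrac23$ behavior in the Binomial TV modulus to give $\epsilon_n\sim n^{-3/2}$. In the paper, the $\tfrac32=1+\tfrac12$ has a different origin: the lemma delivers $\|\fc_i-\fc_j\|_2\ge\epsilon_n'$ (via the weak step $\|\cdot\|_2\ge\|\cdot\|_\infty$), and the paper then uses an $\ell_2$-ball packing, whose ball volume contributes the extra $\tfrac{n}{2}\log\tfrac{n}{2}$ via Stirling. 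Your $\ell_\infty$ grid is in fact \emph{tighter} than the paper's $\ell_2$-ball packing and would prove a stronger upper bound; that still implies $\eqref{Ineq.LU}$, but your stated arithmetic leading to $\tfrac32$ from the grid is not internally consistent. To match the paper's $\tfrac32$ you would have to either (i) intentionally pass to the lossy $\ell_2$-ball packing, as the paper does, or (ii) accept that the grid gives the sharper constant $1$ and note that this suffices for the theorem as stated.
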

% \end{sqbrrotnbr}
% ---
\begin{proof}
The proof of Theorem~\ref{Th.DI-Capacity} consists of two parts, namely the achievability and the converse proofs, which are provided in Sections~\ref{Sec.Achiev} and \ref{Sec.Conv}, respectively.
\end{proof}
% ---
% --------------------------------------------
\subsection{Lower Bound (Achievability Proof)}
\label{Sec.Achiev}
The achievability proof consists of the following two main steps.
% ---
\begin{itemize}
    \item \textbf{\textcolor{mycolor12}{Step 1}}: First, we propose a codebook construction and  derive an analytical lower bound on the corresponding codebook size using inequalities for sphere packing density.
    \item \textbf{\textcolor{mycolor12}{Step 2}}: Then, to prove that this codebook leads to an achievable rate, we propose a decoder and show that the corresponding type I and type II error rates vanished as $n \to \infty$.
\end{itemize}
% ---
% ------------------------------------
\subsubsection*{Codebook Construction}
Let $A = \min \left( P_{\,\text{max}},P_{\,\text{ave}} \right)$. In the following, we confine ourselves to codewords that meet the condition $0 \leq c_t \leq A$, $\forall \, t \in [\![n]\!]$. We argue that this condition ensures both the average and the peak power constraints in \eqref{Ineq.Const_X}. In particular,
% ---
\begin{itemize}
    \item \textbf{Case 1:} $P_{\,\text{max}} \leq P_{\,\text{ave}}$, then $A = P_{\,\text{max}}$ and the constraint $0 \leq c_t \leq A \, \forall \, t \in [\![n]\!]$ yields $\frac{1}{n} \sum_{t=1}^n c_t \leq A = P_{\,\text{max}}^2 \leq P_{\,\text{ave}}$, that is the average power constraint $\frac{1}{n} \sum_{t=1}^n c_t \leq P_{\,\text{ave}}$ is met. Furthermore, condition $0 \leq c_t \leq A \, \forall \, t \in [\![n]\!]$ implies $0 \leq c_t \leq P_{\,\text{max}} \, \forall \, t \in [\![n]\!]$, i.e., the peak power constraint is attained.
    \item \textbf{Case 2:} $P_{\,\text{max}} > P_{\,\text{ave}}$, then $A = P_{\,\text{ave}}$. Now by $0 \leq c_t \leq A\, \forall \, t \in [\![n]\!]$, we obtain $\frac{1}{n} \sum_{t=1}^n c_t \leq A = P_{\,\text{ave}}$, that is, the average power constraint is fulfilled. Furthermore, the condition $0 \leq c_t \leq A \, \forall \, t \in [\![n]\!]$ implies $0 \leq c_t \leq P_{\,\text{ave}} \leq P_{\,\text{max}} \, \forall \, t \in [\![n]\!]$, that is, the peak power constraint is accomplished.
\end{itemize}
% ---
Hence, in the following, we restrict our considerations to a hyper cube with edge length $A$. We use a packing arrangement of non-overlapping hyper spheres of radius $r_0 = \sqrt{n\epsilon_n}$ in a hyper cube with edge length $A$, where
% ---
\begin{align}
    \label{Eq:epsilon}
    \epsilon_n = \frac{a}{n^{\frac{1}{2}(1-b)}} \;,\,
\end{align}
% ---
and $a>0$ is a non-vanishing fixed constant and $0 < b < 1$ is an arbitrarily small constant\footnote{\,\textcolor{gray!70!black}{We note that our achievability proof is valid for any $b\in(0,1)$; however, arbitrarily small values of $b$ leads to the tightest lower bound and hence are of interest here.}}.

Let $\mathscr{S}$ denote a sphere packing, i.e., an arrangement of $L$ non-overlapping spheres $\S_{\fc_i}(n,r_0)$, $i\in [\![L]\!]$, that are packed inside the larger cube $\Q_{\f0}(n,A)$ with an edge length $A$, see Figure~\ref{Fig.Density}. As opposed to standard sphere packing coding techniques \cite{CHSN13}, the spheres are not necessarily entirely contained within the cube. That is, we only require that the centers of the spheres are inside $\Q_{\f0}(n,A)$ and are disjoint from each other and have a non-empty intersection with $\Q_{\f0}(n,A)$. The packing density $\Updelta_n(\mathscr{S})$ is defined as the ratio of the saturated packing volume to the cube volume $\text{Vol}\left(\Q_{\f0}(n,A)\right)$, i.e.,
% ---
\begin{align}
    \Updelta_n(\mathscr{S}) \triangleq \frac{\text{Vol}\left(\bigcup_{i=1}^{L}\S_{\fc_i}(n,r_0)\right)}{\text{Vol}\left(\Q_{\f0}(n,A)\right)} \,.\,
    \label{Eq.DensitySphereFast}
\end{align}
% ---
Sphere packing $\mathscr{S}$ is called \emph{saturated} if no spheres can be added to the arrangement without overlap.
% ---
\begin{figure}[t]
    \centering
	\scalebox{.85}{
\begin{tikzpicture}[scale=.58,rotate=0][thick]
%\draw[thick] (0,0) circle (3.1cm);
% \tikzstyle{Block2} = [draw,top color=white, bottom color=cyan!30, rectangle, rounded corners, minimum height=2em, minimum width=2em]

%% Entire Spheres
\draw (-1.41,-1.41) circle (1cm);
\draw [fill=gray!20, fill opacity=0.35] (-1.41,-1.41) circle (1cm);
\node [fill=black, shape=circle, inner sep=.8pt] ($.$) at (-1.41,-1.41) {};
\draw (0,0) circle (1cm);
\draw [fill=gray!20, fill opacity=0.35] (0,0) circle (1cm);
\draw (1.41,1.41) circle (1cm);
\draw [fill=gray!20, fill opacity=0.35] (1.41,1.41) circle (1cm);
\node [fill=black, shape=circle, inner sep=.8pt] ($.$) at (1.41,1.41) {};
% ---
\draw (+.52,-1.93) circle (1cm);
\draw [fill=gray!20, fill opacity=0.35] (+.52,-1.93) circle (1cm);
\node [fill=black, shape=circle, inner sep=.8pt] ($.$) at (+.52,-1.93) {};
\draw (1.93,-.52) circle (1cm);
\draw [fill=gray!20, fill opacity=0.35] (1.93,-.52) circle (1cm);
\node [fill=black, shape=circle, inner sep=.8pt] ($.$) at (1.93,-.52) {};
\draw (-1.93,+.52) circle (1cm);
\draw [fill=gray!20, fill opacity=0.35] (-1.93,.52) circle (1cm);
\node [fill=black, shape=circle, inner sep=.8pt] ($.$) at (-1.93,.52) {};
\draw (-.52,1.93) circle (1cm);
\draw [fill=gray!20, fill opacity=0.35] (-.52,1.93) circle (1cm);
\node [fill=black, shape=circle, inner sep=.8pt] ($.$) at (-.52,1.93) {};

% --- Partial Spheres
\draw (2.45,-2.45) circle (1cm);
\draw [inner color=mycolor16,outer color=gray!20, fill opacity=0.5] (2.45,-2.45) circle (1cm);
\node [fill=black, shape=circle, inner sep=.8pt] ($.$) at (2.45,-2.45) {};
\draw (2.81,2.81) circle (1cm);
\draw [inner color=mycolor16,outer color=gray!20, fill opacity=0.5] (2.81,2.81) circle (1cm);
\node [fill=black, shape=circle, inner sep=.8pt] ($.$) at (2.81,2.81) {};
\draw (-2.45,2.45) circle (1cm);
\draw [inner color=mycolor16,outer color=gray!20, fill opacity=0.5] (-2.45,2.45) circle (1cm);
\node [fill=black, shape=circle, inner sep=.8pt] ($.$) at (-2.45,2.45) {};
\draw (-2.81,-2.81) circle (1cm);
\draw [inner color=mycolor16,outer color=gray!20, fill opacity=0.5] (-2.81,-2.81) circle (1cm);
\node [fill=black, shape=circle, inner sep=.8pt] ($.$) at (-2.81,-2.81) {};

\foreach \s in {3}
{
\draw [thick] (-\s,-\s) -- (\s,-\s) -- (\s,\s) -- (-\s,\s) -- (-\s,-\s);
}

\node [fill=black, shape=circle, inner sep=.8pt] ($.$) at (0,0) {};

% --- Arrow
% \draw [dashed] (0,0) -- (3,0) node [right,font=\large] {$A/2$};
\draw [dashed] (0,0) -- (-3,0) node [left,font=\large] {$A/2$};
\draw [dashed] (-2.81,-2.81) -- (-3.71,-2.83) node [left,font=\large] {$\sqrt{n\epsilon_n}$};
\draw [dashed] (3,-3) -- (-3,3) node [above left,font=\large] {$A\sqrt{n}$};
\end{tikzpicture}}
	\vspace{3mm}
	\caption{Illustration of a saturated sphere packing inside a cube, where small spheres of radius $r_0 = \sqrt{n\epsilon_n}$ cover a larger cube. Dark gray colored spheres are not entirely contained within the larger cube, and yet they contribute to the packing arrangement. As we assign a codeword to each sphere center, the $1$-norm and arithmetic mean of a codeword are bounded by $A$ as required.}
	\vspace{-5mm}
	\label{Fig.Density}
\end{figure}
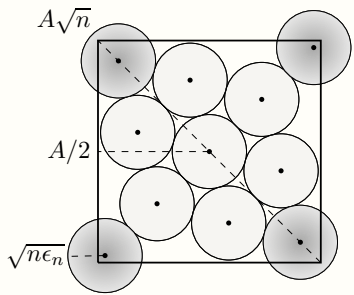
% ---
In particular, we use a packing argument that has a similar flavor as that observed in the Minkowski--Hlawka theorem for saturated packing \cite{CHSN13}.

Specifically, consider a saturated packing arrangement of 
% ---
\begin{align}
    \label{Eq.Union_Spheres}
    \bigcup_{i=1}^{M(n,R)} \S_{\fc_i}(n,\sqrt{n\epsilon_n})
\end{align}
% ---
spheres with radius $r_0=\sqrt{n\epsilon_n}$ embedded within cube $\Q_{\f0}(n,A)$. Then, for such an arrangement, we have the following lower \cite[Lem.~2.1]{C10} and upper bounds \cite[Eq.~45]{CHSN13} on the packing density 
% ---
\begin{align}
    \label{Ineq.Density}
    %\label{Ineq.Density}
    2^{-n} \leq \Updelta_n(\mathscr{S}) \leq 2^{-0.599n} \;.\,
\end{align}
% ---
In our subsequent analysis, we use the above lower bound which can be proved as follows: For the saturated packing arrangement given in \eqref{Eq.Union_Spheres}, there cannot be a point in the larger cube $\Q_{\f0}(n,A)$ with a distance of more than $2r_0$ from all sphere centers. Otherwise, a new sphere could be added which contradicts the assumption that the union of $M(n,R)$ spheres with radius $\sqrt{n\epsilon_n}$ is saturated. Now, if we double the radius of each sphere, the spheres with radius $2r_0$ cover thoroughly the entire volume of $\Q_{\f0}(n,A)$, that is, each point inside the hyper cube $\Q_{\f0}(n,A)$ belongs to at least one of the small spheres. In general, the volume of a hyper sphere of radius $r$ is given by \cite[Eq.~(16)]{CHSN13}
% ---
\begin{align}
    \text{Vol}\left(\S_{\fx}(n,r)\right) = \frac{\pi^{\frac{n}{2}}}{\Gamma(\frac{n}{2}+1)} \cdot r^{n} \,.\,
    \label{Eq.VolS}
\end{align}
% ---
Hence, if the radius of the small spheres is doubled, the volume of $\bigcup_{i=1}^{M(n,R)} \S_{\fc_i}(n,\sqrt{n\epsilon_n})$ is increased by $2^n$. Since the spheres with radius $2r_0$ cover $\Q_{\f0}(n,A)$, it follows that the original $r_0$-radius packing has a density of at least $2^{-n}$~\footnote{\,\textcolor{gray!70!black}{We note that the proposed proof of the lower bound in \eqref{Ineq.Density} is non-constructive in the sense that, while the existence of the respective saturated packing is proved, no systematic construction method is provided.}}.
We assign a codeword to the center $\fc_i$ of each small sphere. The codewords satisfy the input constraint as $0 \leq c_{i,t} \leq A$, $\forall t \in [\![n]\!]$, $\forall i\in [\![L]\!]$, which is equivalent to
% ---
\begin{align}
    \label{Ineq.Norm_Infinity}
    \norm{\fc_i}_{\infty} \leq A \;.\,
\end{align}
% ---
Since the volume of each sphere is equal to $\text{Vol}(\S_{\fc_1}(n,r_0))$ and the centers of all spheres lie inside the cube, the total number of spheres is bounded from below by
%---
\begin{align}
    \label{Eq.M_Achiev}
    M & = \frac{\text{Vol}\left(\bigcup_{i=1}^{L}\S_{\fc_i}(n,r_0)\right)}{\text{Vol}(\S_{\fc_1}(n,r_0))}
    \nonumber\\&
    = \frac{\Updelta_n(\mathscr{S}) \cdot \text{Vol}\left(\Q_{\f0}(n,A)\right)}{\text{Vol}(\S_{\fc_1}(n,r_0))}
    \nonumber\\&
    \geq 2^{-n} \cdot \frac{A^n}{\text{Vol}(\S_{\fc_1}(n,r_0))}
    \,,\,
\end{align}
% ---
where the first inequality holds by (\ref{Eq.DensitySphereFast}) and the second inequality holds by (\ref{Ineq.Density}).

The above bound can be further simplified as follows
% ---
\begin{align}
    \log M & \geq \log \left( \frac{A^n}{\text{Vol}\left(\S_{\fc_1}(n,r_0)\right)} \right) - n
    \nonumber\\&
    \stackrel{(a)}{\geq} n\log\left( \frac{A}{\sqrt{\pi} r_0 } \right) + \log \left( \floor{\frac{n}{2}} ! \right) - n
    \nonumber\\&
    \stackrel{(b)}{=} n \log A - n \log r_0 + \floor{\frac{n}{2}} \log \left( \floor{\frac{n}{2}} \right) - \floor{\frac{n}{2}} \log e + o\left( \floor{\frac{n}{2}} \right) - n \;,\,
\end{align}
% ---
where $(a)$ exploits (\ref{Eq.VolS}) and
% ---
\begin{align}
    \Gamma\left( \frac{n}{2} + 1 \right) & \stackrel{(a)}{=} \frac{n}{2} \Gamma \left( \frac{n}{2} \right)
    \nonumber\\
    & \stackrel{(b)}{\geq} \floor{\frac{n}{2}} \Gamma \left( \floor{\frac{n}{2}} \right)
    \nonumber\\
    & \triangleq \floor{\frac{n}{2}} !
    \,,    
\end{align}
% ---
where $(a)$ holds by the recurrence relation of the Gamma function for a positive real argument and $(b)$ follows from $\floor{\frac{n}{2}} \leq \frac{n}{2}$ for positive integer $n$ and the monotonicity of the Gamma function for $n \geq 4 \equiv \floor{\frac{n}{2}} \in [z_1,\infty)$ where $z_1 \approx 1.46$ is the first root of the Digamma function; and $(b)$ follows from Stirling's approximation, that is, $\log n! = n \log n - n \log e + o(n)$ for integer $n$, \cite[P.~52]{F66}. Now, for $r_0 = \sqrt{n\epsilon_n} = \sqrt{a}n^{\frac{1+b}{4}}$, we obtain
% ---
\begin{align}
    & \log M
    \nonumber\\&
    \geq n \log \frac{A}{\sqrt{a}} - \frac{1}{4}(1+b) \, n \log n + \floor{\frac{n}{2}} \log \left( \floor{\frac{n}{2}} \right) - \floor{\frac{n}{2}} \log e + o\left( \floor{\frac{n}{2}} \right) - n
    \nonumber\\&
    \stackrel{(a)}{>} n \log \frac{A}{\sqrt{a}} - \frac{1}{4}(1+b) \, n \log n + \left( \frac{n}{2} - 1 \right) \log \left( \frac{n}{2} - 1 \right) - \floor{\frac{n}{2}} \log e + o\left( \frac{n}{2} - 1 \right) - n
    \nonumber\\&
    \stackrel{(b)}{\geq} n \log \frac{A}{\sqrt{a}} - \frac{1}{4}(1+b) \, n \log n + \left( \frac{n}{2} - 1 \right) \left( \log \left( \frac{n}{2} \right) - 1 \right) - \frac{n}{2} \log e + o\left( \frac{n}{2} - 1 \right) - n
    \nonumber\\&
    \stackrel{(c)}{=} n \log \frac{A}{\sqrt{a}} - \frac{1}{4}(1+b) \, n \log n + \frac{1}{2} n\log n - 2n - \log n  - \frac{n}{2} \log e + o\left( \frac{n}{2} - 1 \right)
    \nonumber\\&
    = \left( \frac{1-b}{4} \right) \, n \log n + n \left( \log \frac{A}{e\sqrt{a}} \right) - 2n - \log n  - \frac{n}{2} \log e + o\left( \frac{n}{2} - 1 \right)
    \;,\,
    \label{Eq.Log_M}
\end{align}
% ---
where $(a)$ holds by $\floor{\frac{n}{2}} > \frac{n}{2} - 1$ for integer $n$, $(b)$ holds since $\log(t-1) \geq \log t - 1$ for $t \geq 2$ and $\floor{\frac{n}{2}} \leq \frac{n}{2}$ for integer $n$, and $(c)$ follows since base of logarithm is 2. Observe that the dominant term \eqref{Eq.Log_M} is of order $n \log n$. Hence, for obtaining a finite value for the lower bound of the rate, $R$, \eqref{Eq.Log_M} induces the scaling law of $M$ to be $2^{(n\log n)R}$. Therefore, we obtain
% ---
\begin{align}
    R & \geq \frac{1}{n\log n} \left[ \left( \frac{1-b}{4} \right) n\log n + n \log \left( \frac{A}{e\sqrt{a}} \right) - 2n - \log n  - \frac{n}{2} \log e + o\left( \frac{n}{2} - 1 \right) \right] \;,\,
\end{align}
% ---
which tends to $\frac{1}{4}$ when $n \to \infty$ and $b\rightarrow 0$.
% -----------------------
\subsubsection*{Encoding}
Given a message $i \in [\![M]\!]$, transmit $\fx = \fc_i$.
% -----------------------
\subsubsection*{Decoding}
Fix $e_1,e_2 > 0$ and let $\zeta_0, \zeta_1 > 0$ be arbitrarily small constants. Before we proceed, for the sake of brevity of analysis, we introduce the following conventions. Let:
% ---
\begin{itemize}
    \item $Y_t(i) \sim \text{Binom($c_{i,t}T_s$,p)}$ denote the channel output at time $t$ \emph{conditioned} that $\fx=\fc_i$ was sent.
    \item $\fY(i)= (Y_1(i),\ldots, Y_{n}(i))$
\end{itemize}
% ---
Let
% ---
\begin{align}
    \label{Eq.delta_n}
    \delta_n & = \frac{A}{n^{\frac{1}{2}(1-b)}}
    % \\
    % n\delta_n & = An^{\frac{1}{2}(1+b)}
    % \\
    % n\delta_n^2 & = An^{b}
    \,,
\end{align}
% ---
where $0 < b < 1$  is an arbitrarily small constant.
To identify whether a message $j \in [\![M]\!]$ was sent, the decoder checks whether the channel output $\fy$ belongs to the following decoding set,
% ---
\begin{align}
    % \mathbbmss{T}_j & = \left\{ \fy \in \mathbb{N}_0^n \,:\; \norm{\fy - \fc_jpT_s} \leq \delta_n \right\}
    \mathbbmss{T}_j & = \left\{\fy \in \mathbb{N}_0^n \,:\; \left| T(\fy,\fc_j) \right| \leq \delta_n \right\}
    \,.
\end{align}
% ---
where
% ----
\begin{align}
    T(\fy,\fc_j) = \frac{1}{n} \sum_{t=1}^n \left( y_t - pT_sc_{j,t} \right)^2 - (1-p)y_t \,,
\end{align}
% ---
is referred to as the \emph{decoding metric} evaluated for observation vector $\fy$ and codeword $\fc_j$.
% -----------------------------
\subsubsection*{Error Analysis}
Consider the type I error, i.e., when the transmitter sends $\fc_i$, yet $\fY \notin \mathbbmss{T}_i$. For every $i \in [\![M]\!]$, the type I error probability is bounded by
% ---
\begin{align}
    \label{Eq.TypeI}
    P_{e,1}(i) = \Pr\left( \left| T(\fY(i),\fc_j) \right| > \delta_n \right) \,,
\end{align}
% ---
In order to bound $P_{e,1}(i)$, we apply the Chebyshev's inequality, namely
% ---
\begin{align}
    \label{Ineq.TypeI_Cheb}
    \Pr\left( \left| T(\fY(i),\fc_j) - \mathbb{E}\left[ T(\fY(i),\fc_j) \right] \right| > \delta_n \right) \leq \frac{\text{Var}\left[ T(\fY(i),\fc_j) \right]}{\delta_n^2} \,.
\end{align}
% ---
First, we calculate the expectation of the decoding metric as follows
% ---
\begin{align}
    \label{Ineq.TypeI_Expectation}
    & \mathbb{E}\left[ \frac{1}{n} \sum_{t=1}^n \left( Y_t(i) - pT_sc_{i,t} \right)^2 - (1-p)Y_t(i) \right]
    \nonumber\\
    & = \frac{1}{n} \sum_{t=1}^n \mathbb{E}\left[ \left( Y_t(i) - pT_sc_{i,t} \right)^2 - (1-p)Y_t(i) \right]
    \nonumber\\&
    = \frac{1}{n} \sum_{t=1}^n \mathbb{E}\left[ \left( Y_t(i) - pT_sc_{i,t} \right)^2 \right] - \mathbb{E}\left[ (1-p)Y_t(i) \right]
    \nonumber\\&
    = \frac{1}{n} \sum_{t=1}^n \text{Var}\left[ Y_t(i) \right] - \mathbb{E}\left[ Y_t(i) \right] (1-p)
    \nonumber\\&
    = \frac{1}{n} \sum_{t=1}^n pT_sc_{i,t}(1-p) - pT_sc_{i,t}(1-p)
    \nonumber\\&
    = 0
    \,.
\end{align}
% ---
Second, since the channel is memoryless, we can derive the variance of decoding metric as follows
% ---
\begin{align}
    \label{Ineq.TypeI_Variance-1}
    & \text{Var}\left[ \frac{1}{n} \sum_{t=1}^n \left( Y_t(i) - pT_sc_{i,t} \right)^2 - (1-p)Y_t(i) \right]
    \nonumber\\
    & = \frac{1}{n^2} \text{Var}\left[ \sum_{t=1}^n \left( Y_t(i) - pT_sc_{i,t} \right)^2 - (1-p)Y_t(i) \right]
    \nonumber\\
    & \stackrel{(a)}{=} \frac{1}{n^2} \sum_{t=1}^n \text{Var}\left[ \left( Y_t(i) - pT_sc_{i,t} \right)^2 - (1-p)Y_t(i) \right]
    \nonumber\\&
    = \frac{1}{n^2} \sum_{t=1}^n \text{Var}\left[ Y_t^2(i) - \left( 2pT_sc_{i,t} + 1 - p \right) Y_t(i) \right] \,,
\end{align}
% ---
where $(a)$ holds since the channel is memoryless. Now we apply the identity $\text{Var}\big[ aX - bY \big] = a^2 \text{Var}[X] + b^2 \text{Var}[Y] - 2ab \text{Cov}[X,Y]$ to summand in \eqref{Ineq.TypeI_Variance-1}, namely,
% ---
\begin{align}
    \label{Ineq.TypeI_Variance-2}
    & \text{Var}\left[ Y_t^2(i) - \left( 2pT_sc_{i,t} + 1 - p \right) Y_t(i) \right]
    \nonumber\\&
    = \text{Var}\big[ Y_t^2(i) \big] + \left( 2pT_sc_{i,t} + 1 - p \right)^2 \text{Var}\left[ Y_t(i) \right] - \left( 2pT_sc_{i,t} + 1 - p \right) \text{Cov}\left[ Y_t^2(i),Y_t(i) \right] \,.
\end{align}
% ---
Now, in order to bound \eqref{Ineq.TypeI_Variance-1}, we use the following three knowledge. First, observe that for a Binomial variable $Y_t(i) \sim \text{Binom($c_{i,t}T_s$,p)}$, we have $\text{Var}[Y_t^2(i)] \leq \mathbb{E}[Y_t(i)^4]$. Second, the identity $\text{Cov}[X,Y] \leq \sqrt{\text{Var}[X] \cdot \text{Var}[Y]}$, provide upper bound for covariance of variables with finite variances. Third, triangle inequality provide upper bound for summation of terms with different sign, i.e., $a - b \leq |a - b| \leq |a| + |b|$. Therefore, 
% ---
\begin{align}
    \label{Ineq.TypeI_Variance-3}
    & \frac{1}{n^2} \sum_{t=1}^n \text{Var}\left[ Y_t^2(i) - \left( 2pT_sc_{i,t} + 1 - p \right) Y_t(i) \right]
    \nonumber\\
    & \leq \frac{\splitfrac{\sum_{t=1}^n \mathbb{E}\left[ Y_t^4(i) \right] + \left( 2pT_sc_{i,t} + 1 - p \right)^2 pT_sc_{i,t}}{+ \left( 2pT_sc_{i,t} + 1 - p \right) \sqrt{\mathbb{E}\left[ Y_t^4(i) \right] \cdot pT_sc_{i,t}(1-p)}}}{n^2}
    \nonumber\\&
    \stackrel{(a)}{=} \frac{\splitfrac{n\Big[ (c_{i,t}T_s)^4 \exp\left( 8 / pT_s c_{i,t} \right) + \left( 2AT_s + 1 \right)^2 AT_s}{ + \left( 2AT_s + 1 \right) \sqrt{(pT_s c_{i,t})^4 \exp\left( 8 / pT_s c_{i,t} \right) AT_s} \Big]}}{n^2}
    \nonumber\\&
    \leq \frac{\splitfrac{A^4T_s^4 \exp\left( 8 / pT_s c_{i,t} \right) + \left( 2AT_s + 1 \right)^2 AT_s}{+ \left( 2AT_s + 1 \right) A^2T_s^2 \sqrt{ \exp\left( 8 / pT_sc_{i,t} \right) AT_s}}}{n^2}
    \,,
\end{align}
% ---
where $(a)$ holds since $0 < p < 1$; and for a Binomial variable $Y_t(i) \sim \text{Binom($c_{i,t}T_s$,p)}$, the non-centered moments are upper bounded as follows
% ---
\begin{align}
    \mathbb{E}\big[ Y_t^k(i) \big] \leq \mathbb{E}^k\big[ Y_t(i) \big] \cdot \exp\Big( k^2 / 2\mathbb{E}\big[ Y_t(i) \big] \Big) \,.
\end{align}
% ---
Therefore, exploiting \eqref{Ineq.TypeI_Cheb}, \eqref{Ineq.TypeI_Expectation} and \eqref{Ineq.TypeI_Variance-3}, we can establish the following upper bound on the type I error probability given in \eqref{Eq.TypeI}, namely
% ---
\begin{align}
    & P_{e,1}(i)
    \nonumber\\
    & = \Pr\left( \left| T(\fY(i),\fc_j) \right| > \delta_n \right)
    \nonumber\\&
    = \frac{ A^4T_s^4 \exp\left( 8 / pT_s c_{i,t} \right) + \left( 2AT_s + 1 \right)^2 AT_s + \left( 2AT_s + 1 \right) A^2T_s^2 \sqrt{ \exp\left( 8 / pT_sc_{i,t} \right) AT_s} }{n\delta_n^2}
    \nonumber\\&
    \stackrel{(a)}{=} \frac{ A^4T_s^4 \exp\left( 8 / pT_s c_{i,t} \right) + \left( 2AT_s + 1 \right)^2 AT_s + \left( 2AT_s + 1 \right) A^2T_s^2 \sqrt{ \exp\left( 8 / pT_sc_{i,t} \right) AT_s} }{n^b}
    \nonumber\\&
    \leq e_1 \,,
\end{align}
% ---
where $(a)$ follows from \eqref{Eq.delta_n}.

Next, we address the type II error, i.e., when $\fY \in \mathbbmss{T}_j$ while the transmitter sent $\fc_i$. Then, for every $i,j \in[\![M]\!]$, where $i \neq j$, the type II error probability is given by
% ---
\begin{align}
    P_{e,2}(i,j) = \Pr \left( \left| T(\fY(i);\fc_j) \right| \leq \delta_n \right)
    % P_{e,2}(i,j) & = \Pr \left( \frac{1}{n} \sum_{t=1}^n \left( Y_t(i) - pT_sc_{j,t} \right)^2 - (1-p)Y_t(i) \leq \delta_n \right)
    \;.\,
    \label{Eq.TypeII-1}
\end{align}
% ---
where
% ---
\begin{align}
    \label{Eq.Decoding_Metric_TypeII-1}
    T(\fY(i);\fc_j) = \frac{1}{n} \sum_{t=1}^n \left( Y_t(i) - pT_sc_{j,t} \right)^2 - (1-p)Y_t(i) \,.
\end{align}
% ---
Observe that \eqref{Eq.Decoding_Metric_TypeII-1} can be expressed as follows
% ---
\begin{align}
    \label{Eq.Decoding_Metric_TypeII-2}
    T(\fY(i);\fc_j) = \frac{1}{n} \sum_{t=1}^n \left( Y_t(i) - pT_sc_{i,t} +( c_{i,t} - c_{j,t} ) pT_s \right)^2 - (1-p)Y_t(i) \,.
\end{align}
% ---
Observe that the sum in \eqref{Eq.Decoding_Metric_TypeII-2} can be expressed as
% ---
\begin{align}
    \sum_{t=1}^n \left( Y_t(i) - pT_sc_{i,t} + ( c_{i,t} - c_{j,t} ) pT_s \right)^2 & = \sum_{t=1}^n \left( Y_t(i) - pT_sc_{i,t} \right)^2 + \sum_{t=1}^n \left( ( c_{i,t} - c_{j,t} ) pT_s \right)^2
    \nonumber\\
    & + 2 \sum_{t=1}^n \left( Y_t(i) - pT_sc_{i,t} \right) \left( ( c_{i,t} - c_{j,t} )  pT_s \right) \,.
    \label{Eq.TypeII-3}
\end{align}
% ---
Then, define the following events
% ---
\begin{align}
    \label{Eq.Event_E_0}
    \E_0 & = \left\{ \fY \in \mathbb{N}_0^n \,:\; \left| \sum_{t=1}^n \left( Y_t(i) - pT_sc_{i,t} \right) \left( ( c_{i,t} - c_{j,t} )  pT_s \right) \right| > n\delta_n \right\} \,,
    \\
    \label{Eq.Event_E_1}
    \E_1 & = \left\{ \fY \,:\; \sum_{t=1}^n \left( Y_t(i) - pT_sc_{i,t} \right)^2 + \sum_{t=1}^n \left( ( c_{i,t} - c_{j,t} ) pT_s \right)^2 - (1-p)Y_t(i) \leq 2n\delta_n \right\} \,,
    \\
    \label{Eq.Event_E_i_j}
    \E_{i,j} & = \left\{ \fY \in \mathbb{N}_0^n \,:\; \left| \sum_{t=1}^n \left( Y_t(i) - pT_sc_{i,t} +( c_{i,t} - c_{j,t} ) pT_s \right)^2 - (1-p)Y_t(i) \leq  n\delta_n \right| \right\}
    \\
    \label{Eq.Event_E_i_j-2}
    \E_{i,j}' & = \left\{ \fY \in \mathbb{N}_0^n \,:\; \sum_{t=1}^n \left( Y_t(i) - pT_sc_{i,t} +( c_{i,t} - c_{j,t} ) pT_s \right)^2 - (1-p)Y_t(i) \leq  n\delta_n \right\}\;.\,
\end{align}
% ----
Then,
% ---
\begin{align}
    P_{e,2}(i,j) & = \Pr\left( \E_{i,j} \right)
    \nonumber\\&
    = \Pr \left( \left| \sum_{t=1}^n \left( Y_t(i) - pT_sc_{i,t} +( c_{i,t} - c_{j,t} ) pT_s \right)^2 - (1-p)Y_t(i) \right| \leq n\delta_n \right)
    \nonumber\\&
    \stackrel{(a)}{\leq} \Pr \left( \left| \sum_{t=1}^n \left( Y_t(i) - pT_sc_{i,t} +( c_{i,t} - c_{j,t} ) pT_s \right)^2 \right| - \left| \sum_{t=1}^n (1-p)Y_t(i) \right| \leq n\delta_n \right)
    \nonumber\\&
    \stackrel{(b)}{\leq} \Pr \left( \sum_{t=1}^n \left( Y_t(i) - pT_sc_{i,t} +( c_{i,t} - c_{j,t} ) pT_s \right)^2 - \sum_{t=1}^n (1-p)Y_t(i) \leq n\delta_n \right)
    \nonumber\\&
    = \Pr\big( \E_{i,j}' \big) \,,
    \label{Eq.TypeII-2}
\end{align}
% ---
where $(a)$ exploits the reverse triangle inequality, $|\alpha| - |\beta| \leq |\alpha - \beta|$, and $(b)$ holds since $\alpha,\beta \geq 0$.

Now, we apply the law of total probability to event $\mathcal{E}_{i,j}'$ over $\E_0$ and its complement $\E_0^c$, and obtain the following upper bound on the type II error probability,
% ---
\begin{align}
    \label{Eq.TypeIIError-E_0+E_1}
    P_{e,2}(i,j) & \leq \Pr\big( \E_{i,j}' \big)
    \nonumber\\&
    = \Pr\big( \E_{i,j}' \cap \E_0 \big) + \Pr\big( \mathcal{E}_{i,j}' \cap \E_0^c \big)
    \nonumber\\&
    \stackrel{(a)}{\leq} \Pr\big( \E_0 \big) + \Pr\big( \mathcal{E}_{i,j}' \cap \E_0^c \big)
    \nonumber\\&
    \stackrel{(b)}{=} \Pr\big( \E_0 \big) + \Pr\big( \E_1 \big) \,,
\end{align}
% ---
where $(a)$ follows from $\E_{i,j}' \cap \E_0 \subset \E_0$ and $(b)$ holds since the event $\mathcal{E}_{i,j}' \cap \E_0^c$ yields the event $\E_1$, with the following argument. Observe that,
% ---
\begin{align}
    & \Pr\big( \mathcal{E}_{i,j}' \cap \E_0^c \big) 
    \nonumber\\
    & = \Pr\left( \sum_{t=1}^n \left( Y_t(i) - pT_sc_{i,t} \right)^2 + \sum_{t=1}^n \left( ( c_{i,t} - c_{j,t} ) pT_s \right)^2 - (1-p)Y_t(i) \leq n\delta_n - (-n\delta_n) \right)
    \nonumber\\&
    \stackrel{(a)}{=} \Pr\left( \sum_{t=1}^n \left( Y_t(i) - pT_sc_{i,t} \right)^2 + \sum_{t=1}^n \left( ( c_{i,t} - c_{j,t} ) pT_s \right)^2 - (1-p)Y_t(i) \leq 2n\delta_n \right)
    \nonumber\\&
    \stackrel{(b)}{=} \Pr\big( \E_1 \big) \,,
\end{align}
% ---
where $(a)$ holds since given the complementary event $\E_0^c$, we obtain
% ---
\begin{align}
    \sum_{t=1}^n \left( Y_t(i) - pT_sc_{i,t} \right) \left( ( c_{i,t} - c_{j,t} )  pT_s \right) \geq - n\delta_n \;,
\end{align}
% ---
, and $(b)$ follows from \eqref{Eq.Event_E_1}.

Now, we proceed to bound $\Pr(\E_0)$. By Chebyshev's inequality, we can establish the following upper bound on $\Pr(\E_0)$ as follows
% ---
\begin{align}
    \Pr(\E_0) & = \Pr \left( \left| \sum_{t=1}^n \left( Y_t(i) - pT_sc_{i,t} \right) \left( ( c_{i,t} - c_{j,t} )  pT_s \right) \right| > n\delta_n \right)
    \nonumber\\&
    \leq \frac{\text{Var}\left[ \sum_{t=1}^n \left( Y_t(i) - pT_sc_{i,t} \right) \left( ( c_{i,t} - c_{j,t} )  pT_s \right) \right]}{(n\delta_n)^2}
    \nonumber\\&
    = \frac{\sum_{t=1}^n \text{Var}\left[ \left( Y_t(i) - pT_sc_{i,t} \right) \left( ( c_{i,t} - c_{j,t} )  pT_s \right) \right]}{(n\delta_n)^2}
    \nonumber\\&
    = \frac{\sum_{t=1}^n  \left( ( c_{i,t} - c_{j,t} )  pT_s \right)^2 \text{Var}\left[ \left( Y_t(i) - pT_sc_{i,t} \right) \right]}{(n\delta_n)^2}
    \nonumber\\&
    = \frac{T_s^2p^2 \sum_{t=1}^n \left( c_{i,t} - c_{j,t} \right)^2 \text{Var}\left[ Y_t(i) \right]}{(n\delta_n)^2}
    \nonumber\\&
    \leq \frac{T_s^2p^2 \sum_{t=1}^n \left( c_{i,t} - c_{j,t} \right)^2 \cdot ApT_s(1-p)}{(n\delta_n)^2}
    \nonumber\\&
    = \frac{AT_s^3p^3(1-p) \norm{\fc_i - \fc_j}^2}{(n\delta_n)^2} \,,
    \label{InEq.Event_E_0_Prob}
\end{align}
% ---
Observe that
% ---
\begin{align}
    \norm{\fc_i - \fc_j}^2 & \stackrel{(a)}{\leq} \left( \norm{\fc_i} + \norm{\fc_j} \right)^2
    \nonumber\\
    & \stackrel{(b)}{\leq} \left( \sqrt{n} \norm{\fc_i}_{\infty} + \sqrt{n} \norm{\fc_j}_{\infty} \right)^2
    \nonumber\\
    & \stackrel{(c)}{\leq} \left( \sqrt{n} A + \sqrt{n} A \right)^2
    \nonumber\\
    & = 4nA^2 \,,
\end{align}
% ---
where $(a)$ holds by the triangle inequality, $(b)$ follows since $\norm{.} \leq \sqrt{n} \norm{.}_{\infty}$ and $(c)$ is valid by \eqref{Ineq.Norm_Infinity}.
Hence,
% ---
\begin{align}
    \Pr(\E_0) & \leq \frac{T_s^3p^3 A(1-p) \norm{\fc_i - \fc_j}^2}{(n\delta_n)^2}
    \nonumber\\&
    \leq \frac{4A^2T_s^3p^3 A(1-p) n}{n^2\delta_n^2}
    \nonumber\\&
    \leq \frac{4A^3T_s^3p^3 (1-p)}{n\delta_n^2}
    \nonumber\\&
    = \frac{4A^3T_s^3p^3 (1-p)}{n^b}
    \nonumber\\&
    \triangleq \zeta_0 \,,
    \label{InEq.Event_E_0_Prob-2}
\end{align}
% ---
We now proceed with bounding $\Pr\left(\E_1 \right)$ as follows. Based on the codebook construction, each pair of codeword are distanced by at least $r_0 = \sqrt{n\epsilon_n}$, hence,
% ---
\begin{align}
    \norm{pT_s(\fc_i - \fc_j)}^2 & \geq T_s^2p^2 n\epsilon_n
    \nonumber\\&
    = 3n\delta_n \;,
\end{align}
% ---
where the equality holds by \eqref{Eq.delta_n}.
Thus, we can establish the following upper bound for event $\E_1$:
% ---
\begin{align}
    & \Pr(\E_1) 
    \nonumber\\
    & \leq \left( \sum_{t=1}^n \left( Y_t(i) - pT_sc_{i,t} \right)^2 - (1-p)Y_t(i) \leq 2n\delta_n - \norm{pT_s (\fc_i - \fc_j)}^2 \right)
    \nonumber\\&
    \leq \left( \sum_{t=1}^n \left( Y_t(i) - pT_sc_{i,t} \right)^2 - (1-p)Y_t(i) \leq 2n\delta_n - 3n\delta_n \right)
    \nonumber\\&
    = \left( \sum_{t=1}^n \left( Y_t(i) - pT_sc_{i,t} \right)^2 - (1-p)Y_t(i) \leq -n\delta_n \right)
    \nonumber\\&
    \stackrel{(a)}{\leq} \frac{\text{Var}\left[ \sum_{t=1}^n \left( Y_t(i) - pT_sc_{i,t} \right)^2 - (1-p)Y_t(i) \right]}{n^2\delta_n^2}
    \nonumber\\&
    \stackrel{(b)}{=} \frac{ A^4T_s^4 \exp\left( 8 / pT_s c_{i,t} \right) + \left( 2AT_s + 1 \right)^2 AT_s + \left( 2AT_s + 1 \right) A^2T_s^2 \sqrt{ \exp\left( 8 / pT_sc_{i,t} \right) AT_s} }{n^b}
    \nonumber\\&
    \triangleq \zeta_1 \,,
\end{align}
% ---
where $(a)$ follows from applying Chebyshev's inequality, $(b)$ holds by similar line of arguments as we made in type I error probability analysis, see \eqref{Ineq.TypeI_Cheb} and the derivations afterward.

Therefore, recalling \eqref{Eq.TypeIIError-E_0+E_1}, we obtain
% ---
\begin{align}
    P_{e,2}(i,j) \leq \Pr(\E_0) + \Pr(\E_1) \leq \zeta_0 + \zeta_1 \leq e_2 \,,\,
\end{align}
% ---
hence, $P_{e,2}(i,j) \leq e_2$ holds for sufficiently large $n$ and arbitrarily small $e_2 > 0$. We have thus shown that for every $e_1,e_2>0$ and sufficiently large $n$, there exists an $(n, M(n,R), K(n,\kappa), \allowbreak e_1, e_2)$ code.
% ---------------------------------------
\subsection{Upper Bound (Converse Proof)}
\label{Sec.Conv}
The converse proof consists of the following two main steps.
% ---
\begin{itemize}
    \item \textbf{\textcolor{mycolor12}{Step 1}}: We show in Lemma~\ref{Lem.Converse} that for any achievable rate (for which the type I and type II error probabilities vanish as $n\to\infty$), the distance between every pair of codeword should be at least larger than a threshold.
    \item \textbf{\textcolor{mycolor12}{Step~2}}: Employing Lemma~\ref{Lem.Converse}, we derive an upper bound on the codebook size of achievable DI codes.
\end{itemize}
% ---
We start with the following lemma which establish a lower bound on the letter-wise ratio for every pair of codewords.
% ---
\begin{lemma}
\label{Lem.Converse}
Suppose that $R$ is an achievable rate for the Binomial channel $\bB$. Consider a sequence of $(n, M(n,R), K(n,\kappa), \allowbreak e_1^{(n)}, \allowbreak e_2^{(n)})$ codes $(\C^{(n)},\T^{(n)})$ such that $e_1^{(n)}$ and $e_2^{(n)}$ tend to zero as $n\rightarrow\infty$. Then, given a sufficiently large $n$, the codebook $\C^{(n)}$ satisfies the following property. For every pair of codewords, $\fc_{i_1}$ and $\fc_{i_2}$, such that $i_1,i_2 \in [\![M]\!]$ and $i_1\neq i_2$, there exist $t \in [\![n]\!]$, such that,
% ---
\begin{align}
    \label{Ineq.Conv_Distance}
    \left| c_{i_1,t} - c_{i_2,t} \right| > \epsilon_n' \;,
\end{align}
% ---
where
% ---
\begin{align}
    \label{Eq.Conv_Epsilon_n}
    \epsilon_n' = \frac{P_{\,\text{max}}}{n^{1+b}} \;,
\end{align}
% ---
with $b > 0$ being an arbitrarily small constant.
\end{lemma}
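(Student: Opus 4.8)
\textbf{Proof proposal for Lemma~\ref{Lem.Converse}.}
The plan is to argue by contradiction: suppose that for infinitely many $n$ there exist two distinct messages $i_1 \neq i_2$ whose codewords are ``too close'' in every coordinate, i.e.\ $\left| c_{i_1,t} - c_{i_2,t} \right| \leq \epsilon_n'$ for all $t \in [\![n]\!]$, with $\epsilon_n' = P_{\,\text{max}}/n^{1+b}$. I would then show that the two conditional output distributions $W^n(\cdot\,|\,\fc_{i_1})$ and $W^n(\cdot\,|\,\fc_{i_2})$ are so close (in total variation) that the decoding sets $\mathbbmss{T}_{i_1}$ and $\mathbbmss{T}_{i_2}$ cannot simultaneously give small type~I and type~II error. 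Concretely, for a valid code we need $W^n(\mathbbmss{T}_{i_1}\,|\,\fc_{i_1}) \geq 1 - e_1^{(n)}$ while $W^n(\mathbbmss{T}_{i_1}\,|\,\fc_{i_2}) \leq e_2^{(n)}$, so that
\begin{align}
    \bigl\| W^n(\cdot\,|\,\fc_{i_1}) - W^n(\cdot\,|\,\fc_{i_2}) \bigr\|_{\mathrm{TV}}
    \;\geq\; W^n(\mathbbmss{T}_{i_1}\,|\,\fc_{i_1}) - W^n(\mathbbmss{T}_{i_1}\,|\,\fc_{i_2})
    \;\geq\; 1 - e_1^{(n)} - e_2^{(n)} \;\xrightarrow{n\to\infty}\; 1 .
\end{align}
Thus it suffices to prove the reverse: that the per-letter closeness assumption forces the total variation distance to stay bounded away from $1$ (in fact to vanish), which is the desired contradiction.

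The core estimate is a single-letter continuity bound for the Binomial law: I would bound the total variation distance between $\mathrm{Binom}(\lfloor T_s c_{i_1,t}\rfloor, p)$ and $\mathrm{Binom}(\lfloor T_s c_{i_2,t}\rfloor, p)$ (or, using the approximation $\lfloor T_s x\rfloor \approx T_s x$ flagged in the system-model section, between $\mathrm{Binom}(T_s c_{i_1,t}, p)$ and $\mathrm{Binom}(T_s c_{i_2,t}, p)$) in terms of $|c_{i_1,t} - c_{i_2,t}|$. Writing out the pmf ratio $\binom{n_1}{y}p^y(1-p)^{n_1-y}\big/\binom{n_2}{y}p^y(1-p)^{n_2-y}$ with $n_\ell = T_s c_{i_\ell,t}$, this reduces to controlling ratios of Gamma functions $\Gamma(n_1+1)/\Gamma(n_2+1)$ and $\Gamma(n_1-y+1)/\Gamma(n_2-y+1)$; standard inequalities on $\Gamma(x+\delta)/\Gamma(x)$ give a bound of the form $\bigl|\log\frac{W(y|c_{i_1,t})}{W(y|c_{i_2,t})}\bigr| = \mathcal{O}\bigl(|c_{i_1,t}-c_{i_2,t}|\cdot\log(\text{something})\bigr)$, uniformly in the relevant range $0\le y\le \max(n_1,n_2)$, using $0 \le c_{i,t}\le P_{\,\text{max}}$. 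Summing the single-letter bounds over $t$ via Pinsker's inequality (or a direct tensorization of total variation), the $n$-letter total variation is at most something like $\sqrt{n}\cdot \mathcal{O}(\epsilon_n' \log n) = \mathcal{O}(n^{1/2}\cdot n^{-(1+b)}\log n) = \mathcal{O}(n^{-1/2-b}\log n) \to 0$. This contradicts the lower bound of $1$ above, completing the proof; the precise choice of exponent $1+b$ in $\epsilon_n'$ is exactly what makes this product vanish with room to spare.

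I would organize the write-up as: (i) set up the contradiction hypothesis and record the TV lower bound from the code property; (ii) state and prove the single-letter lemma bounding $\bigl\|\mathrm{Binom}(T_s c_1,p)-\mathrm{Binom}(T_s c_2,p)\bigr\|_{\mathrm{TV}}$ by $C(P_{\,\text{max}},p,T_s)\cdot|c_1-c_2|\cdot\log n$ (treating the cases $c_1 \lessgtr c_2$ and handling the floor operation / the edge cases $y$ near $0$ or near $n_\ell$ separately, since binomial coefficients vanish outside their support — this case analysis is what the introduction alludes to with ``in every possible cases''); (iii) tensorize over the $n$ coordinates; (iv) plug in $\epsilon_n' = P_{\,\text{max}}/n^{1+b}$ and let $n\to\infty$ to reach the contradiction.

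The main obstacle I anticipate is step~(ii): getting a \emph{clean, uniform} single-letter bound on the Binomial total variation in terms of the parameter gap. The difficulty is that the two binomials have (possibly) different supports $\{0,\dots,\lfloor T_s c_1\rfloor\}$ and $\{0,\dots,\lfloor T_s c_2\rfloor\}$, so the naive pmf-ratio argument breaks at the boundary, and the Gamma-ratio inequalities behave differently depending on whether the arguments are large or close to $1$; one must split into the regimes where $y$ is in the common support versus where only one pmf is nonzero, and bound each carefully (the boundary contribution is itself $\mathcal{O}(\epsilon_n')$ in probability since it corresponds to adding $\lfloor T_s|c_1-c_2|\rfloor$ extra atoms each of small mass). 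Everything else — the TV/error-probability inequality and the tensorization — is routine.
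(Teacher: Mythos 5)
Your proposal follows the same basic strategy as the paper's own proof: argue by contradiction assuming per-coordinate closeness, invoke the identity $P_{e,1}(i_1)+P_{e,2}(i_2,i_1)=1-\sum_{\fy\in\mathbbmss{T}_{i_1}}\bigl[W^n(\fy|\fc_{i_1})-W^n(\fy|\fc_{i_2})\bigr]$ (equivalently, the total-variation lower bound you write), and then exploit continuity of the Binomial law via two-sided bounds on ratios $\Gamma(a)/\Gamma(b)$, splitting into cases according to the sign of $c_{i_1,t}-c_{i_2,t}$ coordinate-by-coordinate. In that respect you have reconstructed the proof skeleton accurately, including the three-way case analysis.

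The genuine point of departure is the intermediate quantity you bound. You propose a \emph{single-letter} total-variation (or KL) estimate followed by tensorization (subadditivity of TV or Pinsker). The paper instead keeps the $n$-fold product $\prod_{t=1}^n W(y_t|c_{i_2,t})/W(y_t|c_{i_1,t})$ intact and shows, directly from the Gamma-ratio inequalities, that it lies in $[1-\kappa,1+\kappa]$ uniformly in $\fy$; the key observation is that the \emph{sum} of per-coordinate exponents is bounded by $n T_s\epsilon_n' = T_s P_{\,\text{max}}/n^b\to 0$, so the $n$-fold ratio is multiplicatively close to $1$ without any Pinsker-type loss or $\log n$ factor (the base of the exponential is a constant $2(AT_s+1)$, not something $n$-dependent). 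This is marginally cleaner and avoids needing the per-letter KL to be finite (which touches your support concern). Your route also reaches a vanishing TV bound, so both arguments close the contradiction; note, though, a small slip in the exponent bookkeeping: Pinsker applied to an $n$-fold product with per-letter KL of order $\epsilon_n'\log n$ gives $\sqrt{n\cdot\epsilon_n'\log n}$, not $\sqrt{n}\cdot\epsilon_n'\log n$, but both vanish since $\epsilon_n'=P_{\,\text{max}}/n^{1+b}$, so the conclusion is unaffected.

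Your flagged obstacle — that the two binomials have different supports, so the pmf ratio is $0$ or $\infty$ for $y$ between $\lfloor T_s c_{i_1,t}\rfloor$ and $\lfloor T_s c_{i_2,t}\rfloor$ — is a legitimate concern. In fact the paper's own write-up does not handle it explicitly: it writes $\frac{(T_sc_{i_1,t}-y_t)!}{(T_sc_{i_2,t}-y_t)!}$ and applies the Gamma-ratio bound as if both arguments are positive, which silently restricts $y_t$ to the common support. Under the approximation $\lfloor T_s x\rfloor\approx T_s x$ used throughout, the boundary atoms carry total probability $O(T_s\epsilon_n')$ per coordinate, so, as you anticipate, they can be absorbed; but a fully rigorous treatment would say this explicitly, and you are right that it is the place where care is most needed.
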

% ---
In the following, we provide the proof of Lemma~\ref{Lem.Converse}. The method of proof is by contradiction, namely, we assume that the condition given in \eqref{Ineq.Conv_Distance} is violated and then we show that this leads to a contradiction, namely, sum of the type I and type II error probabilities converge to one, i.e., $\lim_{n\to\infty} \left[ P_{e,1}(i_1) + P_{e,2}(i_2,i_1) \right] = 1$.
% ---
\begin{proof}
Fix $e_1$ and $e_2$. Let $\mu,\theta,\eta,\zeta$ be arbitrarily small positive. Assume to the contrary that there exist two messages $i_1$ and $i_2$, where $i_1\neq i_2$, such that,
% ---
\begin{align}
    \label{Ineq.Conv_Distance_Neg}
    \left| c_{i_1,t} - c_{i_2,t} \right| \leq \epsilon_n' \;,
\end{align}
% ---
which implies
% ---
\begin{align}
    \label{Ineq.Conv_Distance_Neg-1}
    c_{i_1,t} - c_{i_2,t} & \geq - \epsilon_n' \;,
    \\
    \label{Ineq.Conv_Distance_Neg-2}
    c_{i_2,t} - c_{i_1,t} & \geq - \epsilon_n' \;,
    \\
    \label{Ineq.Conv_Distance_Neg-3}
    c_{i_1,t} - c_{i_2,t} & \leq \epsilon_n' \;,
    \\
    \label{Ineq.Conv_Distance_Neg-4}
    c_{i_2,t} - c_{i_1,t} & \leq \epsilon_n' \;.
\end{align}
% ---
Observe that
% ---
\begin{align}
    \label{Eq.Error_Sum_1}
    P_{e,1}(i_1) + P_{e,2}(i_2,i_1)
    & = \left[ 1 - \sum_{\fy \in \mathbbmss{T}_{i_1}} W^n \big( \fy \, \big| \, \fc_{i_1} \big) \right] + \sum_{\fy \in \mathbbmss{T}_{i_1}} W^n \big( \fy \, \big| \, \fc_{i_2} \big)
    \,.\,
\end{align}
% ---
\begin{align}
    \label{Ineq.Cond_Channel_Diff}
    & W^n \big( \fy \, \big| \, \fc_{i_1} \big) - W^n \big( \fy \, \big| \, \fc_{i_2} \big)
    \nonumber\\
    & = W^n \big( \fy \, \big| \, \fc_{i_1} \big) \left[ 1 - \frac{W^n \big( \fy \, \big| \, \fc_{i_2} \big)}{W^n \big( \fy \, \big| \, \fc_{i_1} \big)}\right]
    \nonumber\\&
    = W^n \big( \fy \, \big| \, \fc_{i_1} \big) \left[ 1 - \frac{\prod_{t=1}^n \binom{T_{\rm s}c_{i_2,t}}{y_t} p^{y_t} (1-p)^{T_{\rm s}c_{i_2,t}-y_t}}{\prod_{t=1}^n \binom{T_{\rm s}c_{i_1,t}}{y_t} p^{y_t} (1-p)^{T_{\rm s}c_{i_1,t}-y_t}}\right]
    \nonumber\\&
    = W^n \big( \fy \, \big| \, \fc_{i_1} \big) \left[ 1 - \prod_{t=1}^n \frac{\binom{T_{\rm s}c_{i_2,t}}{y_t}}{\binom{T_{\rm s}c_{i_2,t}}{y_t}} \cdot (1-p)^{T_{\rm s}\big(c_{i_2,t}-c_{i_1,t}\big)} \right]
    \nonumber\\&
    = W^n \big( \fy \, \big| \, \fc_{i_1} \big) \left[ 1 - \prod_{t=1}^n \frac{T_{\rm s}c_{i_2,t} !}{T_{\rm s}c_{i_1,t} !} \cdot \frac{(T_{\rm s}c_{i_1,t} - y_t)!}{(T_{\rm s}c_{i_2,t} - y_t)!}
    % \frac{\frac{T_{\rm s}c_{i_2,t} !}{}}{\frac{T_{\rm s}c_{i_1,t} !}{(T_{\rm s}c_{i_1,t} - y_t)!}}
    \cdot (1-p)^{T_{\rm s}\big(c_{i_2,t}-c_{i_1,t}\big)} \right] \,.
\end{align}
% ---
In order to bound \eqref{Ineq.Cond_Channel_Diff}, we exploit the following useful double-inequality regarding ratio of two Gamma functions \cite[Eq.~4.15]{Qi09}. For $0 < a < b$, we have
% ---
\begin{align}
    \min\left\{ a,\frac{a+b-1}{2} \right\} \leq \left( \frac{\Gamma(a)}{\Gamma(b)} \right)^{\frac{1}{a-b}} \leq \max\left\{ a,\frac{a+b-1}{2} \right\} \,.
\end{align}
% ---
To analyze more accurately, we divide into three cases.
% ---
\begin{itemize}
    \setlength\itemsep{.5em}
    \item \textbf{Case 1:} Where $c_{i_1,t} < c_{i_2,t} \,, \forall t \in [\![n]\!]$
    \item \textbf{Case 2:} Where $c_{i_2,t} < c_{i_1,t} \,, \forall t \in [\![n]\!]$
    \item \textbf{Case 3:}
    $\begin{cases}
        c_{i_1,t} < c_{i_2,t} & \mbox{ for ${n_1}_{\geq 1}$ indices}\\
        c_{i_2,t} < c_{i_1,t} & \mbox{ for ${n_2}_{\geq 1}$ indices}\\
    \end{cases}$ \,, $n_1 + n_2 = n$
    % \item \textbf{Case 4:} Where $c_{i_1,t} = c_{i_2,t} \,, \forall t \in [\![n]\!]$
\end{itemize}
% ---
\subsection{Case 1}
Consider the case 1, i.e, where $c_{i_1,t} < c_{i_2,t} \,, \forall t \in [\![n]\!]$. Then, we set $a = T_{\rm s}c_{i_1,t} + 1$ and $b = T_{\rm s}c_{i_2,t} + 1$. Now, condition $0 < a < b$ is met and we obtain
% ---
\begin{align}
    \frac{T_{\rm s}c_{i_2,t} !}{T_{\rm s}c_{i_1,t} !} & = \frac{\Gamma(T_{\rm s}c_{i_2,t} + 1)}{\Gamma(T_{\rm s}c_{i_1,t} + 1)}
    \nonumber\\&
    = \frac{\Gamma(b)}{\Gamma(a)}
    \nonumber\\&
    \geq \left( \frac{1}{\max\left\{ a,\frac{a+b-1}{2} \right\}} \right)^{a-b}
    \nonumber\\&
    = \left( \frac{1}{\max\left\{ T_{\rm s}c_{i_1,t} - y_t + 1 ,  \frac{T_{\rm s} (c_{i_1,t} + c_{i_2,t}) - 2y_t + 1}{2} \right\}} \right)^{T_{\rm s} (c_{i_1,t} - c_{i_2,t})}
    \nonumber\\&
    \geq \left( \frac{1}{\max\left\{ AT_s + 1 , AT_s + \frac{1}{2} \right\}} \right)^{T_{\rm s} (c_{i_1,t} - c_{i_2,t})}
    \nonumber\\&
    \geq \left( \frac{1}{AT_s + 1} \right)^{T_{\rm s} (c_{i_1,t} - c_{i_2,t})} \,.
\end{align}
% ---
and
% ---
\begin{align}
    \frac{T_{\rm s}c_{i_2,t} !}{T_{\rm s}c_{i_1,t} !} & = \frac{\Gamma(T_{\rm s}c_{i_2,t} + 1)}{\Gamma(T_{\rm s}c_{i_1,t} + 1)}
    \nonumber\\&
    = \frac{\Gamma(b)}{\Gamma(a)}
    \nonumber\\&
    \leq \left( \frac{1}{\min\left\{ a,\frac{a+b-1}{2} \right\}} \right)^{a-b}
    \nonumber\\&
    = \left( \frac{1}{\min\left\{ T_{\rm s}c_{i_1,t} - y_t + 1 ,  \frac{T_{\rm s} (c_{i_1,t} + c_{i_2,t}) - 2y_t + 1}{2} \right\}} \right)^{T_{\rm s} (c_{i_1,t} - c_{i_2,t})}
    \nonumber\\&
    \leq \left( \frac{1}{\frac{1}{2}} \right)^{T_{\rm s} (c_{i_1,t} - c_{i_2,t})}
    \nonumber\\&
    \leq 2^{T_{\rm s} (c_{i_1,t} - c_{i_2,t})}
    \,.
\end{align}
% ---
Second, we set $a = T_{\rm s}c_{i_1,t} - y_t + 1$ and $b = T_{\rm s}c_{i_2,t} - y_t + 1$. Now, again condition $0 < a < b$ is met and we obtain
% ---
\begin{align}
    \frac{(T_{\rm s}c_{i_1,t} - y_t)!}{(T_{\rm s}c_{i_2,t} - y_t)!} & = \frac{\Gamma( T_{\rm s}c_{i_1,t} - y_t + 1)}{\Gamma( T_{\rm s}c_{i_2,t} - y_t + 1)}
    \nonumber\\&
    = \frac{\Gamma(a)}{\Gamma(b)}
    \nonumber\\&
    \geq \left( \min\left\{ a,\frac{a+b-1}{2} \right\} \right)^{a-b}
    \nonumber\\&
    = \left( \min\left\{ T_{\rm s}c_{i_2,t} - y_t + 1 ,  \frac{T_{\rm s} (c_{i_1,t} + c_{i_2,t}) - 2y_t + 1}{2} \right\} \right)^{T_{\rm s} (c_{i_1,t} - c_{i_2,t})}
    \nonumber\\&
    \geq \left( \frac{1}{2} \right)^{T_{\rm s} (c_{i_1,t} - c_{i_2,t})} \,.
\end{align}
% ---
and
% ---
\begin{align}
    \frac{(T_{\rm s}c_{i_1,t} - y_t)!}{(T_{\rm s}c_{i_2,t} - y_t)!} & = \frac{\Gamma( T_{\rm s}c_{i_1,t} - y_t + 1)}{\Gamma( T_{\rm s}c_{i_2,t} - y_t + 1)}
    \nonumber\\&
    = \frac{\Gamma(a)}{\Gamma(b)}
    \nonumber\\&
    \leq \left( \max\left\{ a,\frac{a+b-1}{2} \right\} \right)^{a-b}
    \nonumber\\&
    = \left( \max\left\{ T_{\rm s}c_{i_1,t} - y_t + 1 , \frac{T_{\rm s} (c_{i_1,t} + c_{i_2,t}) - 2y_t + 1}{2} \right\} \right)^{T_{\rm s} (c_{i_1,t} - c_{i_2,t})}
    \nonumber\\&
    = \left( \max\left\{ T_{\rm s}c_{i_1,t} + 1 , \frac{T_{\rm s} (c_{i_1,t} + c_{i_2,t}) + 1}{2} \right\} \right)^{T_{\rm s} (c_{i_1,t} - c_{i_2,t})}
    \nonumber\\&
    = \left( \max\left\{ AT_{\rm s} + 1 , \frac{2AT_{\rm s} + 1}{2} \right\} \right)^{T_{\rm s} (c_{i_1,t} - c_{i_2,t})}
    \nonumber\\&
    \leq \left( AT_{\rm s} + 1 \right)^{T_{\rm s} (c_{i_1,t} - c_{i_2,t})} \,.
\end{align}
% ---
Now, observe that
% ---
\begin{align}
    & \prod_{t=1}^n \frac{T_{\rm s}c_{i_2,t} !}{T_{\rm s}c_{i_1,t} !} \cdot \frac{(T_{\rm s}c_{i_1,t} - y_t)!}{(T_{\rm s}c_{i_2,t} - y_t)!} \cdot (1-p)^{T_{\rm s}\big(c_{i_2,t}-c_{i_1,t}\big)}
    \nonumber\\&
    \geq \prod_{t=1}^n \left( \frac{1}{AT_s + 1} \right)^{T_{\rm s} (c_{i_1,t} - c_{i_2,t})} \cdot \left( \frac{1}{2} \right)^{T_{\rm s} (c_{i_1,t} - c_{i_2,t})} \cdot (1-p)^{T_{\rm s}\big(c_{i_2,t}-c_{i_1,t}\big)} 
    \nonumber\\&
    = \left( \frac{1}{2(AT_s + 1)} \right)^{\sum_{t=1}^n T_{\rm s} (c_{i_1,t} - c_{i_2,t})} \cdot \left( \frac{1}{1-p} \right)^{\sum_{t=1}^n T_{\rm s} (c_{i_1,t} - c_{i_2,t})}
    \nonumber\\&
    = \left( 1 - \frac{2AT_s + 1}{2(AT_s + 1)} \right)^{\sum_{t=1}^n T_{\rm s} (c_{i_1,t} - c_{i_2,t})} \cdot \left( \frac{1}{1-p} \right)^{\sum_{t=1}^n T_{\rm s} (c_{i_1,t} - c_{i_2,t})}
    \nonumber\\&
    \stackrel{(a)}{\geq} \left( 1 - \frac{2AT_s + 1}{2(AT_s + 1)} \right)^{-\sum_{t=1}^n T_s \epsilon_n'} \cdot \left( \frac{1}{1-p} \right)^{-\sum_{t=1}^n T_s \epsilon_n'}
    \nonumber\\&
    = \left( \frac{1}{\left( 1 - \frac{2AT_s + 1}{2(AT_s + 1)}\right)^{\sum_{t=1}^n T_s \epsilon_n'}} \right) \cdot \left( \left( 1-p \right)^{\sum_{t=1}^n T_s \epsilon_n'} \right)
    \nonumber\\&
    \stackrel{(b)}{\geq} \left( \frac{1}{\left( 1 - \frac{2AT_s + 1}{2(AT_s + 1)} \cdot \sum_{t=1}^n T_s \epsilon_n' \right)} \right) \cdot \left( \left( 1 - p \cdot \sum_{t=1}^n T_s \epsilon_n' \right) \right)
    \nonumber\\&
    \stackrel{(c)}{\geq} \left( \frac{1}{\left( 1 - \frac{2AT_s + 1}{2(AT_s + 1)} \cdot \sum_{t=1}^n T_s \epsilon_n' \right)} \right) \cdot \left( 1 - p T_{\rm s} n\epsilon_n' \right)
    \nonumber\\&
    \stackrel{(d)}{\geq} \left( \frac{1}{\left( 1 - \frac{2AT_s + 1}{2(AT_s + 1)} \cdot \sum_{t=1}^n T_s \epsilon_n' \right)} \right) \cdot \left( 1 - \frac{p T_{\rm s} P_{\,\text{max}}}{n^b} \right)
    \nonumber\\&
    \stackrel{(e)}{\geq} \frac{1}{1 - 0} \cdot (1 - \kappa)
    \nonumber\\&
    \geq
    1 - \kappa \,.
\end{align}
% ---
where $(a)$ holds by \eqref{Ineq.Conv_Distance_Neg-1}, $(b)$ follows from the Bernoulli inequality, i.e., $(1 - x)^r \leq 1 + rx$ for $x \geq -1$ and $0 \leq \forall r_{\in \mathbb{R}} \leq 1$, $(c)$ holds since $\sum_{t=1}^n T_s \epsilon_n' = T_s n \epsilon_n'$, $(d)$ holds by \eqref{Eq.Conv_Epsilon_n}, and $(e)$ follows from $\sum_{t=1}^n T_s \epsilon_n' \geq 0$.

On the other hand, to provide an upper bound, we have
% ---
\begin{align}
    & \prod_{t=1}^n \frac{T_{\rm s}c_{i_2,t} !}{T_{\rm s}c_{i_1,t} !} \cdot \frac{(T_{\rm s}c_{i_1,t} - y_t)!}{(T_{\rm s}c_{i_2,t} - y_t)!} \cdot (1-p)^{T_{\rm s}\big(c_{i_2,t}-c_{i_1,t}\big)}
    \nonumber\\&
    \stackrel{(a)}{\leq} \prod_{t=1}^n 2^{T_{\rm s} (c_{i_1,t} - c_{i_2,t})} \cdot \left( AT_{\rm s} + 1 \right)^{T_{\rm s} (c_{i_1,t} - c_{i_2,t})}
    \nonumber\\&
    = 2^{\sum_{t=1}^n T_{\rm s} (c_{i_1,t} - c_{i_2,t})} \cdot \left( AT_{\rm s} + 1 \right)^{\sum_{t=1}^n T_{\rm s} (c_{i_1,t} - c_{i_2,t})}
    \nonumber\\&
    \stackrel{(b)}{\leq} 2^{T_{\rm s} n\epsilon_n'} \cdot \left( AT_{\rm s} + 1 \right)^{T_{\rm s} n\epsilon_n'}
    \nonumber\\&
    = \left( 2(AT_{\rm s} + 1) \right)^{T_{\rm s} n\epsilon_n'}
    \nonumber\\&
    \leq \left( 1 + 2AT_{\rm s} \right)^{T_{\rm s}P_{\,\text{max}} / n^b}
    \nonumber\\&
    \stackrel{(c)}{\leq}  1 + 2AT_{\rm s} \cdot T_{\rm s}P_{\,\text{max}} / n^b
    \nonumber\\&
    \leq 1 + \kappa \,.
\end{align}
% ---
where $(a)$ holds since $1 - p \leq 1$, $(b)$ follows from \eqref{Ineq.Conv_Distance_Neg-4}, and $(c)$ holds by the Bernoulli inequality, i.e., $(1 - x)^r \leq 1 + rx$ for $x \geq -1$ and $0 \leq \forall r_{\in \mathbb{R}} \leq 1$.

% -----------------
\subsection{Case 2}
Consider the case 2, i.e, where $c_{i_2,t} < c_{i_1,t} \,, \forall t \in [\![n]\!]$. Then, we set $a = T_{\rm s}c_{i_2,t} + 1$ and $b = T_{\rm s}c_{i_1,t} + 1$. Now, condition $0 < a < b$ is met and we obtain
% ---
\begin{align}
    \frac{T_{\rm s}c_{i_2,t} !}{T_{\rm s}c_{i_1,t} !} & = \frac{\Gamma(T_{\rm s}c_{i_2,t} + 1)}{\Gamma(T_{\rm s}c_{i_1,t} + 1)}
    \nonumber\\&
    = \frac{\Gamma(a)}{\Gamma(b)}
    \nonumber\\&
    \geq \left( \min\left\{ a,\frac{a+b-1}{2} \right\} \right)^{a-b}
    \nonumber\\&
    = \left( \min\left\{ T_{\rm s}c_{i_2,t} + 1 , \frac{T_{\rm s} (c_{i_2,t} + c_{i_1,t}) + 1}{2} \right\} \right)^{T_{\rm s} (c_{i_2,t} - c_{i_1,t})}
    \nonumber\\&
    \geq \left( \frac{1}{2} \right)^{T_{\rm s} (c_{i_2,t} - c_{i_1,t})} \,.
\end{align}
% ---
and
% ---
\begin{align}
    \frac{T_{\rm s}c_{i_2,t} !}{T_{\rm s}c_{i_1,t} !} & = \frac{\Gamma(T_{\rm s}c_{i_2,t} + 1)}{\Gamma(T_{\rm s}c_{i_1,t} + 1)}
    \nonumber\\&
    = \frac{\Gamma(a)}{\Gamma(b)}
    \nonumber\\&
    \leq \left( \max\left\{ a,\frac{a+b-1}{2} \right\} \right)^{a-b}
    \nonumber\\&
    = \left( \max\left\{ T_{\rm s}c_{i_2,t} + 1 , \frac{T_{\rm s} (c_{i_2,t} + c_{i_1,t}) - 2y_t + 1}{2} \right\} \right)^{T_{\rm s} (c_{i_2,t} - c_{i_1,t})}
    \nonumber\\&
    = \left( \max\left\{ T_{\rm s}c_{i_2,t} + 1 , \frac{T_{\rm s} (c_{i_2,t} + c_{i_1,t}) + 1}{2} \right\} \right)^{T_{\rm s} (c_{i_2,t} - c_{i_1,t})}
    \nonumber\\&
    \leq \left( \max\left\{ AT_{\rm s} + 1 , \frac{2AT_{\rm s} + 1}{2} \right\} \right)^{T_{\rm s} (c_{i_2,t} - c_{i_1,t})}
    \nonumber\\&
    \leq \left( AT_{\rm s} + 1 \right)^{T_{\rm s} (c_{i_2,t} - c_{i_1,t})} \,.
\end{align}
% ---
Second, we set $a = T_{\rm s}c_{i_2,t} - y_t + 1$ and $b = T_{\rm s}c_{i_1,t} - y_t + 1$. Now, again condition $0 < a < b$ is met and we obtain
% ---
\begin{align}
    \frac{(T_{\rm s}c_{i_1,t} - y_t)!}{(T_{\rm s}c_{i_2,t} - y_t)!} & = \frac{\Gamma( T_{\rm s}c_{i_1,t} - y_t + 1)}{\Gamma( T_{\rm s}c_{i_2,t} - y_t + 1)}
    \nonumber\\&
    = \frac{\Gamma(b)}{\Gamma(a)}
    \nonumber\\&
    \geq \left( \frac{1}{\max\left\{ a,\frac{a+b-1}{2} \right\}} \right)^{a-b}
    \nonumber\\&
    = \left( \frac{1}{\max\left\{ T_{\rm s}c_{i_2,t} - y_t + 1 ,  \frac{T_{\rm s} (c_{i_1,t} + c_{i_2,t}) - 2y_t + 1}{2} \right\}} \right)^{T_{\rm s} (c_{i_2,t} - c_{i_1,t})}
    \nonumber\\&
    \geq \left( \frac{1}{\max\left\{ AT_s + 1 , AT_s + \frac{1}{2} \right\}} \right)^{T_{\rm s} (c_{i_2,t} - c_{i_1,t})}
    \nonumber\\&
    \geq \left( \frac{1}{AT_s + 1} \right)^{T_{\rm s} (c_{i_2,t} - c_{i_1,t})} \,.
\end{align}
% ---
and
% ---
\begin{align}
    \frac{(T_{\rm s}c_{i_1,t} - y_t)!}{(T_{\rm s}c_{i_2,t} - y_t)!} & = \frac{\Gamma( T_{\rm s}c_{i_1,t} - y_t + 1)}{\Gamma( T_{\rm s}c_{i_2,t} - y_t + 1)}
    \nonumber\\&
    = \frac{\Gamma(b)}{\Gamma(a)}
    \nonumber\\&
    \leq \left( \frac{1}{\min\left\{ a,\frac{a+b-1}{2} \right\}} \right)^{a-b}
    \nonumber\\&
    = \left( \frac{1}{\min\left\{ T_{\rm s}c_{i_2,t} - y_t + 1 ,  \frac{T_{\rm s} (c_{i_1,t} + c_{i_2,t}) - 2y_t + 1}{2} \right\}} \right)^{T_{\rm s} (c_{i_2,t} - c_{i_1,t})}
    \nonumber\\&
    \leq \left( \frac{1}{\frac{1}{2}} \right)^{T_{\rm s} (c_{i_2,t} - c_{i_1,t})}
    \nonumber\\&
    \leq 2^{T_{\rm s} (c_{i_2,t} - c_{i_1,t})} \,.
\end{align}
% ---
Now, observe that
% ---
\begin{align}
    & \prod_{t=1}^n \frac{T_{\rm s}c_{i_2,t} !}{T_{\rm s}c_{i_1,t} !} \cdot \frac{(T_{\rm s}c_{i_1,t} - y_t)!}{(T_{\rm s}c_{i_2,t} - y_t)!} \cdot (1-p)^{T_{\rm s}\big(c_{i_2,t}-c_{i_1,t}\big)}
    \nonumber\\&
    \geq \prod_{t=1}^n \left( \frac{1}{2} \right)^{T_{\rm s} (c_{i_2,t} - c_{i_1,t})} \cdot \left( \frac{1}{AT_s + 1} \right)^{T_{\rm s} (c_{i_2,t} - c_{i_1,t})} \cdot (1-p)^{T_{\rm s}\big(c_{i_2,t}-c_{i_1,t}\big)}
    \nonumber\\&
    = \left( \frac{1}{2(AT_s + 1)} \right)^{\sum_{t=1}^n T_{\rm s} (c_{i_2,t} - c_{i_1,t})} \cdot \left( 1-p \right)^{\sum_{t=1}^n T_{\rm s} (c_{i_2,t} - c_{i_1,t})}
    \nonumber\\&
    = \left( 1 - \frac{2AT_s + 1}{2(AT_s + 1)} \right)^{\sum_{t=1}^n T_{\rm s} (c_{i_2,t} - c_{i_1,t})} \cdot \left( 1-p \right)^{\sum_{t=1}^n T_{\rm s} (c_{i_2,t} - c_{i_1,t})}
    \nonumber\\&
    \stackrel{(a)}{\geq} \left( 1 - \frac{2AT_s + 1}{2(AT_s + 1)} \right)^{-\sum_{t=1}^n T_s\epsilon_n'} \cdot \left( 1-p \right)^{-\sum_{t=1}^n T_s\epsilon_n'}
    \nonumber\\&
    = \left( \frac{1}{\left( 1 - \frac{2AT_s + 1}{2(AT_s + 1)}\right)^{\sum_{t=1}^n T_s\epsilon_n'}} \right) \cdot \left( \frac{1}{\left( 1-p \right)^{\sum_{t=1}^n T_s\epsilon_n'}} \right)
    \nonumber\\&
    \stackrel{(b)}{\geq} \left( \frac{1}{\left( 1 - \frac{2AT_s + 1}{2(AT_s + 1)} \cdot \sum_{t=1}^n T_s\epsilon_n' \right)} \right) \cdot \left( \frac{1}{\left( 1 - p \cdot \sum_{t=1}^n T_s\epsilon_n' \right)} \right)
    \nonumber\\&
    \stackrel{(c)}{\geq} \frac{1}{1 - 0} \cdot \frac{1}{1 - 0}
    \nonumber\\&
    = 1
    \nonumber\\&
    \geq
    1 - \kappa \,.
\end{align}
% ---
where $(a)$ holds by \eqref{Ineq.Conv_Distance_Neg-2}, $(b)$ follows from the Bernoulli inequality, i.e., $(1 - x)^r \leq 1 + rx$ for $x \geq -1$ and $0 \leq \forall r_{\in \mathbb{R}} \leq 1$, and $(c)$ holds since $\sum_{t=1}^n T_s \epsilon_n' \geq 0$.

On the other hand, to provide an upper bound, we have
% ---
\begin{align}
    & \prod_{t=1}^n \frac{T_{\rm s}c_{i_2,t} !}{T_{\rm s}c_{i_1,t} !} \cdot \frac{(T_{\rm s}c_{i_1,t} - y_t)!}{(T_{\rm s}c_{i_2,t} - y_t)!} \cdot (1-p)^{T_{\rm s}\big(c_{i_2,t}-c_{i_1,t}\big)}
    \nonumber\\&
    \stackrel{(a)}{\leq} \prod_{t=1}^n \left( AT_{\rm s} + 1 \right)^{T_{\rm s} (c_{i_2,t} - c_{i_1,t})} \cdot 2^{T_{\rm s} (c_{i_2,t} - c_{i_1,t})}
    \nonumber\\&
    = \left( AT_{\rm s} + 1 \right)^{\sum_{t=1}^n T_{\rm s} (c_{i_2,t} - c_{i_1,t})} \cdot 2^{\sum_{t=1}^n T_{\rm s} (c_{i_2,t} - c_{i_1,t})}
    \nonumber\\&
    \stackrel{(b)}{\leq} \left( AT_{\rm s} + 1 \right)^{T_{\rm s} n\epsilon_n'} \cdot 2^{T_{\rm s} n\epsilon_n'}
    \nonumber\\&
    = \left( 2(AT_{\rm s} + 1) \right)^{T_{\rm s} n\epsilon_n'}
    \nonumber\\&
    \leq \left( 1 + 2AT_{\rm s} \right)^{T_{\rm s}P_{\,\text{max}} / n^b}
    \nonumber\\&
    \stackrel{(c)}{\leq}  1 + 2AT_{\rm s} \cdot T_{\rm s}P_{\,\text{max}} / n^b
    \nonumber\\&
    \leq 1 + \kappa \,.
\end{align}
% ---
where $(a)$ holds since $1 - p \leq 1$, $(b)$ follows from \eqref{Ineq.Conv_Distance_Neg-4}, and $(c)$ holds by the Bernoulli inequality, i.e., $(1 - x)^r \leq 1 + rx$ for $x \geq -1$ and $0 \leq \forall r_{\in \mathbb{R}} \leq 1$.

% -----------------
\subsection{Case 3}
Consider the case 3, i.e., where for $n_1$ indices we have $c_{i_1,t} < c_{i_2,t}$ and for $n_2$ indices we have $c_{i_2,t} < c_{i_1,t}$ such that the total indices sum up to $n$, i.e., $n_1 + n_2 = n$. Therefore, for $n_1$ indices the rule (corresponding inequalities) of case 1 applies and for $n_2$ indices the rule (corresponding inequalities) of case 2 applies. That is, we have (follow the continuation in the landscape page in below)
\afterpage{
\begin{landscape}
\begin{align}
    & \prod_{t=1}^n \frac{T_{\rm s}c_{i_2,t} !}{T_{\rm s}c_{i_1,t} !} \cdot \frac{(T_{\rm s}c_{i_1,t} - y_t)!}{(T_{\rm s}c_{i_2,t} - y_t)!} \cdot (1-p)^{T_{\rm s}\big(c_{i_2,t}-c_{i_1,t}\big)} \leq \left[ \prod_{t=1}^{n_1} 2^{T_{\rm s} (c_{i_1,t} - c_{i_2,t})} \cdot \prod_{t=1}^{n_2} \left( AT_{\rm s} + 1 \right)^{T_{\rm s} (c_{i_2,t} - c_{i_1,t})} \right]
    \nonumber\\&
    \cdot \left[ \prod_{t=1}^{n_1} \left( AT_{\rm s} + 1 \right)^{T_{\rm s} (c_{i_1,t} - c_{i_2,t})} \cdot \prod_{t=1}^{n_2} 2^{T_{\rm s} (c_{i_2,t} - c_{i_1,t})} \right] \cdot \left[ \prod_{t=1}^{n_1} (1-p)^{T_{\rm s}\big(c_{i_2,t}-c_{i_1,t}\big)} \cdot \prod_{t=1}^{n_2} (1-p)^{T_{\rm s}\big(c_{i_2,t}-c_{i_1,t}\big)} \right]
    \nonumber\\&
    \leq \left[ \prod_{t=1}^{n_1} 2^{T_{\rm s} (c_{i_1,t} - c_{i_2,t})} \cdot \left( AT_{\rm s} + 1 \right)^{T_{\rm s} (c_{i_1,t} - c_{i_2,t})} \cdot (1-p)^{T_{\rm s}\big(c_{i_2,t}-c_{i_1,t}\big)} \right] \cdot
    \left[ \prod_{t=1}^{n_2} \left( AT_{\rm s} + 1 \right)^{T_{\rm s} (c_{i_2,t} - c_{i_1,t})} \cdot 2^{T_{\rm s} (c_{i_2,t} - c_{i_1,t})} \cdot (1-p)^{T_{\rm s}\big(c_{i_2,t}-c_{i_1,t}\big)} \right]
    \nonumber\\&
    = \left[ \prod_{t=1}^{n_1} \left( 2(AT_{\rm s} + 1) \right)^{T_{\rm s} (c_{i_1,t} - c_{i_2,t})} \cdot (1-p)^{T_{\rm s}\big(c_{i_2,t}-c_{i_1,t}\big)} \right] \cdot
    \left[ \prod_{t=1}^{n_2} \left( 2(AT_{\rm s} + 1) \right)^{T_{\rm s} (c_{i_2,t} - c_{i_1,t})} \cdot (1-p)^{T_{\rm s}\big(c_{i_2,t}-c_{i_1,t}\big)} \right]
    \nonumber\\&
    = \left[ \left( 2(AT_{\rm s} + 1) \right)^{\sum_{t=1}^{n_1} T_{\rm s} (c_{i_1,t} - c_{i_2,t})} \cdot (1-p)^{\sum_{t=1}^{n_1} T_{\rm s}\big(c_{i_2,t}-c_{i_1,t}\big)} \right] \cdot
    \left[ \left( 2(AT_{\rm s} + 1) \right)^{\sum_{t=1}^{n_2} T_{\rm s} (c_{i_2,t} - c_{i_1,t})} \cdot (1-p)^{\sum_{t=1}^{n_2} T_{\rm s}\big(c_{i_2,t}-c_{i_1,t}\big)} \right]
    \nonumber\\&
    = \left[ \left( 2(AT_{\rm s} + 1) \right)^{T_{\rm s} n_1\epsilon_n'} \cdot (1-p)^{T_{\rm s} n_1\epsilon_n'} \right] \cdot
    \left[ \left( 2(AT_{\rm s} + 1) \right)^{T_{\rm s} n_2\epsilon_n'} \cdot (1-p)^{T_{\rm s} n_2\epsilon_n'} \right]
    \nonumber\\&
    = \left[ \left( 2(AT_{\rm s} + 1) \right)^{T_{\rm s} (n_1 + n_2) \epsilon_n'} \cdot (1-p)^{T_{\rm s} (n_1 + n_2) \epsilon_n'} \right]
    \nonumber\\&
    = \left[ \left( 1 + 2AT_{\rm s} \right)^{T_{\rm s}P_{\,\text{max}} / n^b} \cdot (1-p)^{T_{\rm s}P_{\,\text{max}} / n^b} \right]
    \nonumber\\&
    \stackrel{(b)}{\leq} \left( 1 + 2AT_{\rm s} \right)^{T_{\rm s}P_{\,\text{max}} / n^b}
    % \nonumber\\&
    \stackrel{(c)}{\leq} 1 + 2AT_{\rm s}\cdot T_{\rm s}P_{\,\text{max}} / n^b
    \nonumber\\&
    \leq 1 + \kappa \,.
\end{align}
% ---
where $(b)$ follows since $1 - p \leq 1$, and $(c)$ follows from the Bernoulli inequality, i.e., $(1 - x)^r \leq 1 + rx$ for $x \geq -1$ and $0 \leq \forall r_{\in \mathbb{R}} \leq 1$,
\end{landscape}
}
% -----------------
\iffalse
\subsection{Case 4}
Consider the case 4, i.e., where $c_{i_1,t} = c_{i_2,t} \,, \forall t \in [\![n]\!]$. In this case, we obtain
% ---
\begin{align}
    \frac{T_{\rm s}c_{i_2,t} !}{T_{\rm s}c_{i_1,t} !} = \frac{T_{\rm s}c_{i_2,t} !}{T_{\rm s}c_{i_2,t} !} & = 1 \,,
    \\
    \frac{(T_{\rm s}c_{i_1,t} - y_t)!}{(T_{\rm s}c_{i_2,t} - y_t)!} = \frac{(T_{\rm s}c_{i_2,t} - y_t)!}{(T_{\rm s}c_{i_2,t} - y_t)!} & = 1 \,,
    \\
    (1-p)^{T_{\rm s}\big(c_{i_2,t}-c_{i_1,t}\big)} = (1-p)^{T_{\rm s}\big(c_{i_2,t}-c_{i_2,t}\big)} & = 1 \,.
\end{align}
% ---
Therefore, in this case, the product inside the bracket in \eqref{Ineq.Cond_Channel_Diff} is identical to one, i.e.,
% ---
\begin{align}
    \prod_{t=1}^n \frac{T_{\rm s}c_{i_2,t} !}{T_{\rm s}c_{i_1,t} !} \cdot \frac{(T_{\rm s}c_{i_1,t} - y_t)!}{(T_{\rm s}c_{i_2,t} - y_t)!} \cdot (1-p)^{T_{\rm s}\big(c_{i_2,t}-c_{i_1,t}\big)} = \prod_{t=1}^n 1 \cdot 1 \cdot 1
    % \nonumber\\&
    = \prod_{t=1}^n 1
    % \nonumber\\&
    = 1
    % \nonumber\\&
    \leq 1 + \kappa
    \,.
\end{align}
% ---
\fi

Therefore,
% ---
\begin{align}
    e_1 + e_2 & \geq P_{e,1}(i_1) + P_{e,2}(i_2,i_1) 
    \nonumber\\&
    \stackrel{(a)}{=} \left[ 1 - \sum_{\fy \in \mathbbmss{T}_{i_1}} W^n \big( \fy \, \big| \, \fc_{i_1} \big) \right] + \sum_{\fy \in \mathbbmss{T}_{i_1}} W^n \big( \fy \, \big| \, \fc_{i_2} \big)
    \nonumber\\&
    \stackrel{(b)}{=} 1 - \sum_{\fy \in \mathbbmss{T}_{i_1}} \left[ W^n \big( \fy \, \big| \, \fc_{i_1} \big) - W^n \big( \fy \, \big| \, \fc_{i_2} \big) \right]
    \nonumber\\&
    \stackrel{(c)}{=} 1 - \kappa \sum_{\fy \in \mathbbmss{T}_{i_1}} W^n \big( \fy \, \big| \, \fc_{i_1} \big)
    \nonumber\\&
    = 1 - \kappa \,,
\end{align}
% ---
where $(a)$ follows from, $(b)$ holds by \eqref{Ineq.Cond_Channel_Diff}, and $(c)$ follows since
% ---
\begin{align}
    \sum_{\fy \in \mathbbmss{T}_{i_1}} W^n \big( \fy \, \big| \, \fc_{i_1} \big) & = \Pr\left( \fy \in \mathbbmss{T}_{i_1} \right)
    \nonumber\\&
    \leq \Pr\left( \fy \in \mathbb{N}_0^n \right)
    \nonumber\\&
    = 1 \,.
\end{align}
% ---
Therefore, $e_1 + e_2 \geq 1 - \kappa$ which leads to a contradiction since for sufficiently small $\mu$ and vanishing error probabilities we obtain $\kappa < 1 - e_1 - e_2$. This completes the proof of Lemma~\ref{Lem.Converse}. 
% ---
    % \begin{align}
    %     e_1 + e_2 & \geq 1 - \kappa \;,\,
    % \end{align}
% % ---
\end{proof}
% ---
Next, we employ Lemma~\ref{Lem.Converse} to determine the upper bound on scale of codebook size. Notice that Lemma~\ref{Lem.Converse} implies that the distance between every pair of codewords fulfill
% ---
\begin{align}
    \label{Ineq.Conv_Distance_2}
    \norm{\fc_{i_1} - \fc_{i_2}} & \geq \left|c_{i_1,t} - c_{i_1,t} \right|
    \nonumber\\&
    \geq \epsilon_n'
    \nonumber\\&
    = \frac{P_{\,\text{max}}}{n^{1+b}}
    \;,
\end{align}
% ---
Thus, we can define an arrangement of non-overlapping spheres $\S_{\fc_i}(n,\epsilon'_n)$, i.e., spheres of radius $\epsilon'_n$ that are centered at the codewords $\fc_i$. Since the codewords all belong to a hyper cube $\Q_{\f0}(n,P_{\,\text{max}})$ with edge length $P_{\,\text{max}}$, it follows that the number of packed small spheres, i.e., the number of codewords $M$, is bounded by
% ---
\begin{align}
    \label{Eq.M}
    M & = \frac{\text{Vol}\left(\bigcup_{i=1}^{L}\S_{\fu_i}(n,r_0)\right)}{\text{Vol}(\S_{\fc_1}(n,r_0))}
    \nonumber\\&
    = \frac{\Updelta_n(\mathscr{S}) \cdot \text{Vol}\left(\Q_{\f0}(n,P_{\,\text{max}})\right)}{\text{Vol}(\S_{\fc_1}(n,r_0))}
    \nonumber\\&
    \leq 2^{-0.599n} \cdot\frac{P_{\,\text{max}}^n}{\text{Vol}(\S_{\fc_1}(n,r_0))} \;,\,
\end{align}
% ---
where the last inequality follows from inequality (\ref{Ineq.Density}). Thereby,
% ---
\begin{align}
    \label{Eq.Converse_Log_M}
    \log M & \leq \log \left( \frac{P_{\,\text{max}}^n}{\text{Vol}\left(\S_{\fc_1}(n,r_0)\right)} \right) - 0.599n
    \nonumber\\
    & = n \log P_{\,\text{max}} - n \log r_0 - n \log \sqrt{\pi} + \frac{1}{2}n\log \frac{n}{2} - \frac{n}{2}\log e + o(n) - 0.599n \,,\;
\end{align}
% ---
where the dominant term is again of order $n \log n$. Hence, for obtaining a finite value for the upper bound of the rate, $R$, \eqref{Eq.Converse_Log_M} induces the scaling law of $M$ to be $2^{(n\log n)R}$. Hence, by setting $M(n,R) = 2^{(n\log n)R}$ and $r_0 = \epsilon'_n = P_{\,\text{max}} / n^{1+b}$, we obtain
% ---
\begin{align}
    R & \leq \frac{1}{n\log n} \left[ n \log P_{\,\text{max}} - n \log r_0 - n \log \sqrt{\pi} + \frac{1}{2}n\log \frac{n}{2} - \frac{n}{2}\log e + o(n) - 0.599n \right]
    \nonumber\\
    & = \frac{1}{n\log n} \left[ \left( \frac{1}{2} + \left( 1 + b \right) \right) \, n \log n - n \left( \log P_{\,\text{max}} \sqrt{\pi e} + 1.099 \right)+ o(n) \right]
    \;,\,
\end{align}
% ---
which tends to $\frac{3}{2}$ as $n \to \infty$ and $b \to 0$. This completes the proof of Theorem~\ref{Th.DI-Capacity}.

% ---
\section{Summary and Future Directions}
\label{Sec.Summary}
In this work, we studied the DI problem over the Binomial channel. We assume that the transmitter is subject to both the average and peak molecule release rate constraints. Our results in this work may serve as a model for event-triggered based tasks in the context of future XG applications. In particular, we obtained lower and upper bounds on the DI capacity of the Binomial channel with the codebook size of $M(n,R)=2^{(n\log n)R}=n^{nR}$. Our results for the DI capacity of the Binomial channel revealed that the super-exponential scale of $n^{nR}=2^{(n\log n)R}$ is again the appropriate scale for codebook size. This scale coincides as of the codebook for the memoryless Gaussian channels \cite{Salariseddigh_IT,Salariseddigh_ITW} and Poisson channels \cite{Salariseddigh_GC_IEEE,Salariseddigh22} and stands considerably different from the traditional scales in transmission and RI setups where corresponding codebooks size grows exponentially and double exponentially, respectively.

We show the achievability proof using a sphere packing arrangement of hyper spheres and a distance decoder. In particular, we pack hyper spheres with radius $\sqrt{n\epsilon_n} \sim n^{\frac{1}{4}}$, inside a larger hyper sphere, which results in $\sim 2^{\frac{1}{4} n\log n}$ codewords. For the converse proof, we follow a similar approach as in our previous work for the DI over the Gaussian channels \cite{Salariseddigh_IT,Salariseddigh_arXiv_ITW}. That is, given a certain condition imposed on the codebook, using the continuity of the channel law, we reach to a contradiction on sum of the error probabilities. In general, the derivation here is more involved than the derivation in the DI case \cite{Salariseddigh_ITW} and entails employing of new analysis and inequalities. Here, proving the continuity of Binomial law requires dealing with binomial coefficients and factorial terms. We used inequalities on the ratio of two Gamma function depending on the relation of two codeword's symbols with each other in all possible cases. In our previous work on Gaussian channels with fading \cite{Salariseddigh_ITW}, the converse proof was based on establishing a minimum distance between Euclidean norm of each pair of codewords. Here, we consider a distance (absolute value norm) between symbols of two different codeword in the relevant Lemma; cf. \eqref{Ineq.Conv_Distance}.

\section*{Acknowledgements}
Salariseddigh was supported by the 6G-Life project under grant 16KISK002. Jamali’s work is supported in part by the LOEWE initiative (Hesse, Germany) within the emergenCITY center. Boche was supported in part by the German
Federal Ministry of Education and Research (BMBF) within the national initiative for Post Shannon Communication (NEWCOM) under Grant 16KIS1003K. Deppe was supported by the 6G-Life project under grant 16KISK002. Schober was supported by MAMOKO under grant 16KIS0913.

\section*{}
\bibliography{Lit}
% ---
\end{document}